\documentclass[10pt,journal,a4paper]{IEEEtran} 

\usepackage{amssymb}
\usepackage{amsmath}
\usepackage{amsmath,bm}
\usepackage{amsthm}
\usepackage{graphicx}
\usepackage{cuted}
\usepackage{subfigure}
\usepackage{multirow}
\usepackage{makecell} 
\usepackage{cite}
\usepackage{enumerate}
\usepackage{enumitem}
\usepackage{hyperref}
\usepackage{stfloats}
\usepackage{footnote}
\usepackage{algorithm}
\usepackage{algorithmic}  
\usepackage{color, soul}
\usepackage[labelsep=period]{caption}

\allowdisplaybreaks[4]

\DeclareMathOperator{\Tr}{Tr}
\DeclareMathOperator{\diag}{diag}

\begin{document}
	\newtheorem{theorem}{\bf~~Theorem}
	\newtheorem{remark}{\bf~~Remark}
	\newtheorem{observation}{\bf~~Observation}
	\newtheorem{definition}{\bf~~Definition}
	\newtheorem{lemma}{\bf~~Lemma}
	\newtheorem{preliminary}{\bf~~Preliminary}
	\newtheorem{proposition}{\bf~~Proposition}
	\newtheorem{comment}{\bf~~Comment}
	\renewcommand\arraystretch{0.9}
	\title{{\Huge Holographic Beamforming for Semantic Communication}}
	\author{\normalsize \IEEEauthorblockN{
			{Shuhao Zeng}, \IEEEmembership{\normalsize Member, IEEE},
			{Haobo Zhang}, \IEEEmembership{\normalsize Member, IEEE},
			{Su Wang}, \IEEEmembership{\normalsize Member, IEEE},\\
			{Boya Di}, \IEEEmembership{\normalsize Senior Member, IEEE},
			{Hongliang Zhang}, \IEEEmembership{\normalsize Member, IEEE},
			{Zhu Han}, \IEEEmembership{\normalsize Fellow, IEEE},\\
			{H. Vincent Poor}, \IEEEmembership{\normalsize Life Fellow, IEEE},
			{and Lingyang Song}, \IEEEmembership{\normalsize Fellow, IEEE}}
		\thanks{Shuhao Zeng is with Department of Electrical and Computer Engineering, Princeton University, NJ, USA, and also with School of Electronic and Computer Engineering, Peking University Shenzhen Graduate School, Shenzhen, China (email: shuhao.zeng96@gmail.com).}
		\thanks{Haobo Zhang is with School of Electronic and Computer Engineering, Peking University Shenzhen Graduate School, Shenzhen, China, and also with Department of Engineering, University of Cambridge, Cambridge, UK (email: haobo.zh97@gmail.com)}
		\thanks{Su Wang is with Department of Electrical and Computer Engineering, Princeton University, NJ, USA (email: hw5731@princeton.edu).}
		\thanks{Boya Di and Hongliang Zhang are with School of Electronics, Peking University, Beijing 100871, China (email: hongliang.zhang@pku.edu.cn; boya.di@pku.edu.cn).}
		\thanks{Zhu Han is with Electrical and Computer Engineering Department, University of Houston, Houston, TX, USA, and also with the Department of Computer Science and Engineering, Kyung Hee University, Seoul, South Korea (email: zhan2@uh.edu).}		
		\thanks{H. Vincent Poor is with Department of Electrical and Computer Engineering, Princeton University, NJ, USA (email: poor@princeton.edu).}
		\thanks{Lingyang Song is with School of Electronic and Computer Engineering, Peking University Shenzhen Graduate School, Shenzhen, China, and also with School of Electronics, Peking University, Beijing 100871, China, and also with Hunan Institute of Advanced Sensing and Information Technology, Xiangtan University, Xiangtan, China (email: lingyang.song@pku.edu.cn).}	
	}
	
	\maketitle
	\begin{abstract}
		Holographic beamforming enabled by metamaterial antennas has been proposed to facilitate spatial multiplexing at low hardware cost and low power consumption. However, existing holographic beamforming schemes are mainly developed for conventional bit-communication systems, which have not considered semantic-level importance and thus cannot be directly applied to support semantic communication. Specifically, in conventional bit communication, all bits are treated as equally important. In contrast, in semantic communication, different semantic information contribute unequally to task completion and therefore has different degrees of importance, with more important information requiring higher transmission quality. Ignoring semantic importance in holographic beamforming causes mismatches between importance of semantic information and its received SNR, thus degrading performances. In this paper, we propose a semantic-importance-aware holographic beamforming scheme enabled by metamaterial antennas with tunable radiated amplitudes to support semantic communication. It is challenging to design semantic-aware holographic beamforming schemes due to non-trivial modeling of the impact of semantic importance and unique amplitude-controlled structures of holographic beamforming. {To address this, we characterize the dependence of semantic communication performance on semantic importance and received SNR via data fitting, and design a semantic-aware holographic beamforming algorithm to ensure reliable delivery of highly important semantic information}. Simulation results validate effectiveness of the proposed method.



	\end{abstract}

	\begin{IEEEkeywords}
		Holographic beamforming, semantic communication, semantic importance
	\end{IEEEkeywords}
\vspace{-.4cm}
		
	\section{Introduction}
	Future sixth generation~(6G) networks are envisioned to enable a variety of emerging services such as autonomous driving and augmented reality. These applications typically entail massive and frequent data exchanges, which poses an unaffordable burden on existing wireless systems constrained by limited spectrum resources. To alleviate the scarcity of bandwidth resources and the surge of data traffic, semantic communication has been listed as one of the promising techniques~\cite{Weng_Semantic_2024}. Unlike conventional bit-communication systems, semantic communication transmits only the information that is essential for achieving communication goals by extracting semantic features from source messages, thereby relieving the communication burden~\cite{Zhao_SecDiff}. 
	
	While semantic communications effectively reduce the amount of information to be transmitted, it alone may not be fully sufficient to handle the dramatic growth of wireless data traffic anticipated in future networks. For example, existing semantic communication methods for textual transmission can reduce the data size by about $80\%$ compared with conventional bit-communication systems~\cite{Liu_explainable_IOT}, whereas reports from the International Telecommunication Union (ITU) indicate that global data traffic is expected to increase nearly tenfold every five years~\cite{Tariq_specular_2020}. These observations motivate the integration of semantic communication with large-scale antenna arrays, whose massive number of antenna elements provide significant spatial multiplexing gains and thus enable more efficient deliveries of semantic information.
	
	
	Nevertheless, conventional phased arrays are not suitable for such large-scale deployments, as they rely on a large number of costly phase shifters and amplifiers, which result in prohibitive hardware costs~\cite{Dai_RIS_2020}. Consisting of numerous metamaterial radiation elements, the emerging reconfigurable holographic surface~(RHS) shows great potential to cope with this issue~\cite{Zeng_HISAC_2025}. Specifically, the RHS utilizes the metamaterial to construct a holographic pattern on its surface based on the holographic interference principle. According to the holographic pattern, each element can control its radiation amplitude by tuning the biased voltages applied to onboard diodes to generate desired directional beams~\cite{Deng_RHS_multi_user_2022}. Thus, without complex phase shifting circuits, the RHS can realize low-cost and energy-efficient dynamic beamforming based on reconfigurable holographic patterns, and such a beamforming technique is also known as holographic beamforming.

	Existing works have proposed various holographic beamforming schemes for application to conventional bit-communication systems~\cite{Deng_RHS_multi_user_2022,Yurduseven_holographic_2020,Wang_RHS_2025,Deng_HDMA_2022}. For example, to direct the beams generated by the RHS towards users, the authors in~\cite{Deng_RHS_multi_user_2022} propose an RHS-based hybrid beamforming scheme with digital beamforming at the base station~(BS) and holographic beamforming at the RHS, which is further optimized via sum-rate maximization. In~\cite{Yurduseven_holographic_2020}, the authors further consider satellite communication scenarios and design a metasurface antenna which performs holographic beamforming to support efficient satellite communication. To reduce the complexity of holographic beamforming design algorithms, the authors in~\cite{Deng_HDMA_2022} introduce holographic-pattern division multiple access~(HDMA), {which constructs a unified holographic pattern on the RHS by superimposing multiple user-specific holographic patterns, each corresponding to a directional beam intended for a particular user}.

	However, existing metamaterial-antenna-aided holographic beamforming schemes developed for conventional bit-communication systems adapt poorly to holographic beamforming designs in semantic communication systems, since they do not take the importance of semantic information into account. Specifically, in conventional bit-communications, source messages are merely mapped into bit sequences, with the objective of guaranteeing the reliable delivery of each bit~\cite{Goldsmith_2005}. Therefore, all transmitted bits are treated as equally important. In contrast, semantic communication extracts and conveys the task-relevant semantic information embedded in source messages~\cite{Getu_Performance_2024}, where the semantic importance of different semantic information to task completions is not uniform. For instance, in a traffic image transmission task, consider an image that includes both vehicles and roadside houses as background. The semantic information related to cars is more important than that related to roadside houses, as the former is essential for traffic monitoring and management. Such differences in semantic importance should be explicitly considered in holographic beamforming designs, ensuring that semantic information of higher importance is delivered with higher SNR. If we directly apply conventional holographic beamforming schemes, the transmission quality of important semantic information can be impaired, thereby degrading semantic communication performance and potentially leading to task failures.
	 
	In this paper, we investigate holographic beamforming schemes to support semantic communication. Specifically, a BS equipped with an RHS semantically transmits images to multiple users. To enable efficient delivery of semantic information, the BS applies digital beamforming, and the RHS performs analog beamforming by tuning the radiated amplitude of each metamaterial element in the RHS to enable spatial multiplexing\footnote{Although prior studies have investigated beamforming designs for semantic MIMO communications~\cite{Weng_Semantic_2024,Wu_Deep_2025,Zhang_Beamforming_2025}, most of them focus on fully digital schemes without incorporating RHS-aided amplitude-controlled analog beamforming, and thus are not applicable in this context. Moreover, fully digital beamforming entails excessive hardware costs and power consumption in large-scale deployments, as it requires a large number of costly and power-hungry RF chains.}. However, it is challenging to design semantic-aware RHS-aided holographic beamforming schemes, since it is non-trivial to mathematically model the relationship between semantic communication performances and RHS configurations, which should explicitly account for semantic importance. Furthermore, unlike conventional phase arrays, the analog beamforming at the RHS operates in an amplitude-controlled manner, and thus the beamforming schemes designed for phased-array-enabled semantic MIMO systems cannot be directly applied. To address these challenges, we advance the state-of-the-art in the following aspects:
	\begin{itemize}
		\item We consider a downlink RHS-aided multi-user communication system, where a BS equipped with an RHS transmits images to multiple users in a semantic manner. To reduce latency, each image is divided into multiple sub-images for parallel transmissions. Semantic information is extracted from each sub-image, and its semantic importance is evaluated. To enable the simultaneous transmissions of multi-stream semantic information to multiple users, we propose a hybrid beamforming scheme, where the BS performs digital beamforming, the RHS performs holographic beamforming, and each user conducts receive digital combining.
		\item We characterize the dependence of semantic communication performance, i.e., overall image reconstruction loss, on semantic importance and received SNR of each sub-image in closed form. To adapt the received SNR of the sub-images to their semantic importance and thereby minimize the image reconstruction loss, we jointly optimize the digital and holographic beamformers to adjust the transmit power, channel gains, and allocations of available spatial channels\footnote{Each MIMO channel between the BS and one user can be decomposed into multiple parallel independent spatial channels~\cite{Lee_ZF_2004}, as further illustrated in Section~\ref{sec_digital_BF}. Each spatial channel carries the semantic information extracted from a single sub-image. In this context, the digital beamformer and combiner, and the holographic beamformer influence the system performance by changing the transmit power, channel gains, and allocations of the equivalent spatial channels.}. Closed-form solutions for the optimal semantic–aware power and channel allocations are derived.
			
		\item Simulation results validate the effectiveness of the proposed semantic-aware holographic beamforming method by comparison with both the holographic beamforming scheme designed for conventional bit-communication systems and the beamforming scheme designed for phased-array-enabled semantic MIMO communication systems.
	\end{itemize}
	
	The rest of this paper is organized as follows. In Section~\ref{sec_system_model}, an RHS-aided semantic communication network is modeled. In Section~\ref{sec_problem_formulation}, a closed-form approximation for semantic communication performance is derived, and a holographic beamforming problem is formulated to improve semantic communication performances by taking the semantic importance into account, which is decomposed into two subproblems. A semantic-importance-aware joint algorithm is proposed to solve the two subproblems alternatively in Section~\ref{sec_algorithm_design}. In Section~\ref{sec_simulation}, simulation results are presented, and finally conclusions are drawn in Section~\ref{sec_conclusion}.

	\section{System Model}
	\label{sec_system_model}
	In this section, we first introduce the scenario for RHS-aided semantic communication. Then, the learning models at the BS and the UEs are introduced for joint source and channel coding and semantic importance acquisition, followed by the presentation of beamforming schemes. 
	\subsection{Scenario Description}
	As shown in Fig.~\ref{sysmodel}, we consider a downlink multi-user semantic communication system, where a base station~(BS) semantically transmits images to $J$ users, enabled by a semantic encoder/decoder and a semantic importance acquisition network\footnote{When considering the other modalities or scenarios, the proposed beamforming framework can remain unchanged. We only need to update two modality-dependent components that serve as inputs to the framework, i.e., (a) revise the definition of semantic importance according to the specific task and dataset, (b) update the mapping between each data stream's semantic transmission performance (which is the image reconstruction loss in our task) and its received SNR by updating the corresponding curve parameters.}. Further, to support multi-user transmissions, the BS is equipped with an RHS to support multi-beam steering, which does not need energy-hungry phase shifters~\cite{Di_holo_2025}, and thus is suitable for large-scale array implementations to provide high directive gain~\cite{Zeng_dual_pol_2024}. Further, for spatial multiplexing, each user is equipped with $M$ antennas, where user $j$ simultaneously receives $K_j$ data streams from the BS. 
	
	\begin{figure*}[!t]
		\centering
		\includegraphics[width=0.9\textwidth]{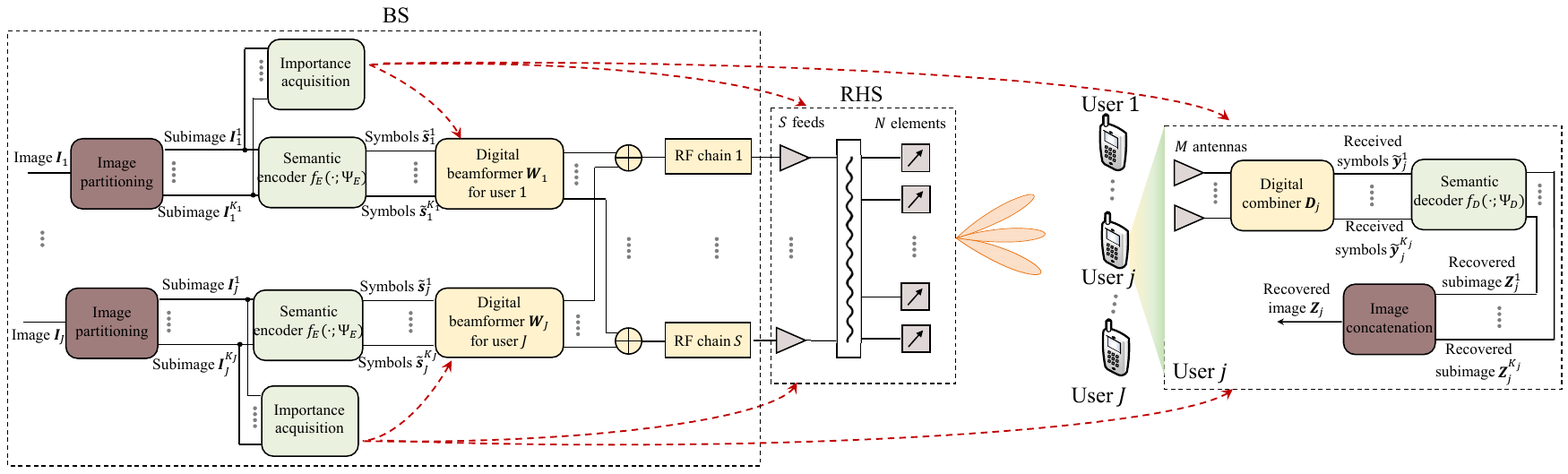}
		\caption{System model of a downlink RHS-aided semantic communication network with multiple users. The red curves in the figure highlight the influence of the importance of semantic information on beamforming\protect\footnotemark.}
		\vspace{-3mm}
		\label{sysmodel}
	\end{figure*}
	
	
	In the following, we summarize the operation of the semantic communication network. To reduce latency, each image for user $j$ is first divided into $K_j$ subimages with equal size. Then, the $K_j$ subimages pass semantic encoders, which extract semantic information from the subimages while performing channel coding to ensure reliable delivery. At the same time, the semantic importance of different subimages are acquired based on semantic segmentation models. Afterwards, by performing hybrid beamforming at the BS and fully digital combining at users, $K_j$ parallel independent equivalent single-input-single-output~(SISO) channels are created for user $j$ without inter-stream and inter-user interference. The transmit symbols, i.e., the output symbols from the semantic encoders, are fed into the $K_j$ parallel channels for simultaneous transmissions. Then, at user $j$, the received $K_j$ symbol streams are fed into the semantic decoder to recover the $K_j$ subfigures, which are then concatenated to reconstruct the original image.
	\footnotetext[2]{Here, the semantic importance is not sent to users or RHS. Instead, it is only utilized for joint beamforming design within the BS.}
	\subsection{Learning Models}
	In this part, we will introduce the semantic encoder/decoder network and the semantic importance acquisition network, respectively.
	
	\subsubsection{Semantic Encoder and Decoder}
	At the BS, an image source first generates images to be transmitted. Let $\bm{I}_j\in \mathbb{R}^{A\times W\times H}$ denote one generated image intended for user $j$, where $A$ is the number of image channels ($A = 1$ for greyscale and $A = 3$ for RGB images), { and $W$ and $H$ represent the number of pixels in the width and height of a figure, respectively.} The image is first divided into $K_j$ smaller sub-images with equal size $W'\times H'$, respectively, for simultaneous transmissions\footnote{The entire image should be spatially partitioned into $K_j$ contiguous subregions, rather than randomly assigning individual pixels into $K_j$ groups. This is because random pixel grouping would disrupt the inherent semantic structure of the original image, thereby preventing the semantic encoder from correctly extracting meaningful features and performing effective compression. 
	
	\par
		
	By contrast, as long as the image is divided into contiguous subregions, the proposed semantic communication framework is robust to the specific partitioning pattern.}. The obtained sub-images are denoted by $\bm{I}_j^{(1)},\dots,\bm{I}_j^{(K_j)}$, respectively. Then, the $K_j$ sub-images pass the semantic encoder, where the semantic information of the figures is extracted and channel coding is finished to generate transmit symbols $\{\widetilde{\bm{s}}_j^{(1)},\dots,\widetilde{\bm{s}}_j^{(K_j)}\}$, i.e.\footnote{The semantic encoder implicitly includes complex-number modulation, i.e., the output signal of the semantic encoder takes complex values.},
	\begin{align}
		\label{semantic_encoder}
		\widetilde{\bm{s}}_j^{(k)}=f_E(\bm{I}_j^{(k)};\bm{\psi}_E),
	\end{align}
	{where $f_E(\cdot;\bm{\psi}_E)$ is the neural-network-based semantic encoder with learnable parameter $\bm{\psi}_E$.} 
	
	When the transmit symbol sequences $\{\widetilde{\bm{s}}_j^{(1)},\dots,\widetilde{\bm{s}}_j^{(K_j)}\}$ transmit through the parallel equivalent SISO channels formed through beamforming, the received symbol sequences at the $j$-th Rx user are denoted by $\{\widetilde{\bm{y}}_j^{(1)},\dots,\widetilde{\bm{y}}_j^{(K_j)}\}$, respectively. The received symbols are first normalized based on the channel gains of the equivalent SISO channels, and then fed into the semantic decoder for image recovery\footnote{The receiver performs more than simple normalization. Specifically, the received RF signals first pass through the RF chains and are converted into baseband signals. Subsequently, digital combining is applied at the receiver. Together with the transmit beamforming, this process aims to suppress inter-user and inter-stream interference. The resulting baseband symbols are then normalized according to the gains of the equivalent SISO channels, i.e., equalization, before being fed into the semantic decoder for channel decoding and semantic recovery.}. Thus, the $k$-th sub-image reconstructed by using the input symbol vector $\widetilde{\bm{y}}_j^{(k)}$ is given by
	\begin{equation}
		\label{sem_decoder}
		\bm{Z}_j^{(k)}=f_D(\widetilde{\bm{y}}_j^{(k)}; \bm{\psi}_D),
	\end{equation}
	{where $f_D(\cdot;\bm{\psi}_D)$ is the neural-network-based semantic decoder with learnable parameters $\bm{\psi}_D$}. Finally, all the $K_j$ sub-images $\{\bm{Z}_j^{(1)},\ldots,\bm{Z}_j^{(K_j)}\}$ are concatenated at user $j$ to reconstruct the original image, which is denoted by $\bm{Z}_j$. 
	
	\subsubsection{Semantic Importance Acquisition}
	In this paper, we focus on the class of objects, an important type of semantic features~\cite{Ven_Three_2022}, and the semantic importance of a subfigure is characterized by the number of pixels $l_j^{(k)}$ within the sub-image whose classes belong to the set $\mathbb{C}$ of desired classes, i.e., $\mathrm{card}\{l_j^{(k)}|l_j^{(k)}\in \mathbb{C}\}$. Here, $\mathrm{card}$ represents the number of elements within a set. A larger number of important pixels indicates higher importance of a sub-image, and thus the corresponding symbols generated by the semantic encoder should be allocated higher transmission quality to guarantee the successful delivery of essential semantic information. {Although the definition of semantic importance is task- and dataset-relevant, the main contribution of this paper is to propose a semantic-importance-aware holographic beamforming framework, which is not tied to any specific definition of semantic importance and does not depend on a particular task\footnote{Within the proposed framework, we first adopt a curve-fitting approach to model a closed-form relationship between each data stream's semantic transmission performance and its received SNR, based on which we can establish an explicit connection between the semantic transmission performance and the beamforming matrix. Here, different semantic-importance levels of the data streams are reflected through different values of the fitted curve parameters. We then formulate a holographic beamforming optimization problem, where we aggregate the semantic transmission performance across all streams into a single objective function, and develop an efficient algorithm to solve it. In this way, semantic importance is explicitly incorporated into holographic beamforming. Such a framework is independent of particular definitions of semantic importance and particular tasks.}.}
	
	To acquire the semantic importance of a sub-figure, a semantic feature extraction model can be utilized to extract the semantic segmentation label for each pixel within the sub-figure. Then, the output semantic segmentation labels for the pixels can be given by
	\begin{align}
		\label{semantic_feature}
		\bm{L}_j^{(k)}=f_S(\bm{I}_j^{(k)};\bm{\psi}_S),
	\end{align}
	{where $f_S(\cdot;\bm{\psi}_S)$ denotes the neural-network-based semantic feature extraction model utilized to acquire the semantic importance of a sub-figure, with $\bm{\psi}_S$ representing model parameters,} $\bm{L}_j^{(k)}\in \mathbb{N}^{W'\times H'}$ is the semantic segmentation labels for subimage $\bm{I}_j^{(k)}$, and the value of the $(w,h)$-th element in $\bm{L}_j^{(k)}$ represents the class to which the $(w,h)$-th pixel in sub-image $\bm{I}_j^{(k)}$ belongs. 
	
	{Based on the semantic segmentation label $\bm{L}_j^{(k)}$, we assign semantic importance to the pixels within the sub-image $\bm{I}_j^{(k)}$. Specifically, we define a semantic weight matrix $\boldsymbol{\Phi}_j^{(k)}$ to characterize the pixel-wise semantic importance, where the $(w,h)$-th element of $\boldsymbol{\Phi}_j^{(k)}$ represents the importance of the $(w,h)$-th pixel in $\bm{I}_j^{(k)}$. A higher weight corresponds to higher semantic importance. According to~\cite{Wu_semantic_2024}, the semantic weight matrix $\boldsymbol{\Phi}_j^{(k)}\in \mathcal{R}^{A\times W'\times H'}$ can be determined by the semantic segmentation label $\bm{L}_j^{(k)}$, i.e.,
	\begin{equation}
		\boldsymbol{\Phi}_j^{(k)}(:,w,h) =
		\begin{cases}
			\theta, & \bm{L}_j^{(k)}(w,h) \in \mathbb{C}, \\[3pt]
			1-\theta, & \bm{L}_j^{(k)}(w,h) \notin \mathbb{C},
		\end{cases}
		\label{eq:weight_matrix_v2}
	\end{equation}
	This indicates that if the class of the $(w,h)$-th pixel belongs to the set $\mathbb{C}$ of desired classes, a high weight $\theta$ is assigned as the semantic weight for all image channels corresponding to this pixel, i.e., $\boldsymbol{\Phi}_j^{(k)}(:,w,h)=\theta$. Otherwise, a smaller weight is assigned, i.e., $\boldsymbol{\Phi}_j^{(k)}(:,w,h)=1-\theta$.}
	

	
	Then, the reconstruction accuracy of the $k$-th sub-image $\bm{I}_j^{(k)}$ at user $j$ can be measured by the average image reconstruction loss weighted by the semantic importance~\cite{Wu_semantic_2024}, i.e.,
	\begin{align}
		\label{weighted_construction_loss_subfigure}
		\xi_j^{(k)}=\mathbb{E}\left\{\frac{1}{W'H'}\|\bm{\Phi}_j^{(k)}\odot (\bm{Z}_j^{(k)}-\bm{I}_j^{(k)})\|_2^2\right\},
	\end{align}
	{where $\bm{Z}_j^{(k)}$ and $\bm{I}_j^{(k)}$ represent the recovered and the original $k$-th sub-image for user $j$, respectively}, $\bm{\Phi}_j^{(k)}$ is the pixel-wise semantic weight of the sub-image $\bm{I}_j^{(k)}$, $\odot$ is Hardmard product, $W'H'$ is the total number of pixels in the sub-figure, $\|\cdot\|_2$ is the Euclidean norm of a vector, {and the expectation $\mathbb{E}\left\{\cdot\right\}$ is taken with respect to the joint distribution of the original sub-figure $\bm{I}_j^{(k)}$ and the recovered sub-figure $\bm{Z}_j^{(k)}$}. To improve the successful transmissions of essential semantic information, the value of one element in semantic weight matrix $\bm{\Phi}_j^{(k)}$ takes a larger value when the corresponding pixel belongs to the desired class $\mathbb{C}$.
	By jointly considering the reconstruction loss of all $K_j$ subfigures $\bm{I}_j^{(1)},\dots,\bm{I}_j^{(K_j)}$, the reconstruction accuracy of the complete figure $\bm{I}_j$ can be given by
	\begin{align}
		\label{weighted_construction_loss}
		\xi_j&=\mathbb{E}\left\{\frac{1}{W'H'K_j}\sum_{k=1}^{K_j}\|\bm{\Phi}_j^{(k)}\odot (\bm{Z}_j^{(k)}-\bm{I}_j^{(k)})\|_2^2\right\}\\
		&=\frac{1}{K_j}\sum_{k=1}^{K_j}\xi_j^{(k)}.\notag	
	\end{align}
	
	
	\subsection{Hybrid Beamforming Models}	
	Define $T$ as the length of the symbol sequence $\widetilde{\bm{s}}_j^{(k)}$ given in (\ref{semantic_encoder}), i.e., $\widetilde{\bm{s}}_j^{(k)}=[s_j^{(k,1)},\dots,s_j^{(k,T)}]^\mathsf{T}$, which is generated by the semantic encoder and corresponds to the $k$-th sub-image intended for user $j$. Consider one encoded symbol $s_j^{(k,t)}$ from the symbol sequence $\widetilde{\bm{s}}_j^{(k)}$. For the simplicity of discussions, we omit the index $t$ in the following. The encoded symbol $s_j^{(k)}$ is assumed to be unit power, i.e., $\mathbb{E}(|s_j^{(k)}|^2)=1$. By concatenating the symbols from $K_j$ different symbol sequences, we can get a encoded symbol vector corresponding to user $j$, i.e., ${\bm{s}}_j=(s_j^{(1)},\dots,s_j^{(K_j)})^T$. Then, the encoded symbol vector for $J$ users can be given by ${\bm{s}} = [{\bm{s}}_1^T,{\bm{s}}_2^T,\ldots,{\bm{s}}_J^T]^T 
	\in \mathbb{C}^{\sum_jK_j \times 1}$. The transmitted semantically encoded symbols ${\bm{s}}$ are first precoded by the digital beamformer $\bm{W}=[\bm{W}_1,\bm{W}_2,\dots,\bm{W}_J]\in \mathbb{C}^{S\times \sum_jK_j}$, where $\bm{W}_j$ is the digital beamformer for the symbol vector $\bm{s}_j$ of user $j$, and $S$ is the number of RF chains at the BS. After digital beamforming, the transmit signals can be given by
	\begin{align}
		\label{tx_signal_digital}
		\bm{s}_D=\bm{W}\bm{s}.
	\end{align} 
	

	
	Then, the BS up-converts the $\bm{s}_D$ over the carrier frequency through $S$ RF chains. Each RF chain is connected to one feed of the RHS, i.e., the number of RF chains $S$ is equal to the number of feeds at the RHS. Further, each feed excites one row of $N_{sub}$ elements, and thus the RHS is a uniform planar array~(UPA) with $N = N_{sub} \times S$ radiation elements. The feeds transform the high frequency current into an electromagnetic wave, i.e., reference wave, propagating on the RHS~\cite{Boya_RHS_online}. The reference wave will excite RHS elements one by one to generate transmitted signals~\cite{Yue_Hybrid_2025}, where the transmitted signals $\bm{x}\in \mathbb{C}^{N\times 1}$ of the RHS elements can be expressed as
	\begin{align}
		\label{tx_signal_RHS}
		\bm{s}_T=\bm{A}\bm{Q}\bm{s}_D.
	\end{align}
	Here, $\bm{A}=\diag\{a_{1},\dots,a_{N}\}\in\mathbb{C}^{N\times N}$ is diagonal holographic beamforming matrix consisting of the radiation amplitudes of different RHS elements, and $\bm{Q}\in \mathbb{C}^{N\times S}$ collects the phase shifts and amplitude changes from the feeds to the RHS elements. Here, the radiation amplitude $a_{n}$ takes value within $[a_{min},a_{max}]$. Further, the matrix $\bm{Q}$ can be expressed as $\bm{Q}=\diag\{\bm{q}_1,\dots,\bm{q}_S\}$, {where $\bm{q}_s$ is a row vector of length $N_{sub}$, i.e., $\bm{q}_s=[q_{s,1},\dots,q_{s,n},\dots,q_{s,N_{sub}}]$}. The $n$-th element $q_{s,n}$ describes the phase shift and amplitude change from the $s$-th feed to the $n$-th RHS element in the $s$-th row, which can be expressed as $q_{n,s}=\sqrt{(1-p_{on}a_{max}^2)^{n-1}}e^{-jk_Rd_{n,s}}$.
	Here, $p_{on}$ is the activation probability of the RHS elements, $a_{max}$ is the maximum radiation amplitude of the RHS elements, $k_R$ is the wavenumber corresponding to the reference wave, and $d_{n,s}$ is the distance between the $n$-th RHS element in the $s$-th row and the $s$-th feed.
	
	After propagation through wireless channels, the transmitted signals of the RHS can be received by different users. Then, user $j$ will perform combining over received signals from $M$ antennas by utilizing a $K_j\times M$ digital combiner $\bm{D}_j$, where $K_j$ and $M$ are the number of data streams and the number of antennas. Based on (\ref{tx_signal_digital}) and (\ref{tx_signal_RHS}), the combined signal at user $j$ can be expressed as
	\begin{align}
		\label{rx_signal_combine}
		\bm{y}_j=&\bm{D}_j\bm{H}_j\bm{A}\bm{Q}\bm{W}\bm{s}+\bm{D}_j\bm{n}_j\notag\\
		=&\bm{D}_j\bm{H}_j\bm{A}\bm{Q}\bm{W}_j\bm{s}_j+\sum_{j'\neq j}\bm{D}_j\bm{H}_j\bm{A}\bm{Q}\bm{W}_{j'}\bm{s}_{j'} +\bm{D}_j\bm{n}_j,\notag\\
		=&\bm{E}_j^{\text{diag}}\bm{s}_j+(\bm{E}_j-\bm{E}_j^{\text{diag}})\bm{s}_j+\sum_{j'\neq j}\bm{D}_j\bm{H}_j\bm{A}\bm{Q}\bm{W}_{j'}\bm{s}_{j'}\notag\\ &+\bm{D}_j\bm{n}_j,
	\end{align}
	where $\bm{H}_j\in \mathbb{C}^{M\times N}$ is the channel matrix from the $N$ RHS elements to the $M$ antennas of user $j$, {$\bm{A}$ and $\bm{Q}$ are the diagonal holographic beamforming matrix and the propagation matrix from the feeds to the RHS elements defined in (\ref{tx_signal_RHS}), respectively, $\bm{W}=[\bm{W}_1,\bm{W}_2,\dots,\bm{W}_J]$ is digital beamforming matrix at the BS defined in (\ref{tx_signal_digital})}, and $\bm{n}_j$ is an additive white Gaussian noise vector at user $j$, each with zero mean and variance $\sigma^2$ representing receiver noise. Furthermore, $\bm{E}_j=\bm{D}_j\bm{H}_j\bm{A}\bm{Q}\bm{W}_j$ is the equivalent channel from the BS to user $j$, and $\bm{E}_j^{diag}$ represents the diagonal matrix consisting of the diagonal elements of matrix $\bm{E}_j$. In (\ref{rx_signal_combine}), the first term is the desired signal, the second term represents the inter-stream interference, the third term is inter-user interference, and the last term denotes noise. 
	
	In addition, the received signal vector can be expanded as $\bm{y}_j=[y_j^{(1)},\dots,y_j^{(K_j)}]^\mathsf{T}$, where $y_j^{(k)}$ is the received symbol corresponding to the transmit symbol $s_j^{(k)}$. By collecting and concatenating received symbols $y_j^{(k)}$ over $T$ symbol periods, we can derive the received symbol sequence $\widetilde{\bm{y}}_j^{(k)}=[y_j^{(k,1)},\dots,y_j^{(k,T)}]^{\mathsf{T}}$, which will be further processed by the semantic decoder to recover the $k$-th subimage $\bm{I}_j^{(k)}$, as indicated in (\ref{sem_decoder}).
	
	
	To guarantee the transmission quality, we assume that both the inter-stream interference and the inter-user interference are eliminated through hybrid beamforming and combining\footnote{We would like to point out that it is reasonable to assume perfect interference suppression, which is achieved primarily by the digital beamforming and receive combining~\cite{Lee_ZF_2004}. Specifically, within the alternating optimization framework, when optimizing the amplitudes of the RHS elements, we do not explicitly impose the inter-user and inter-stream interference suppression constraints, since the main objective of the RHS optimization is to enhance the effective channel gains of different data streams through RHS configuration design. Therefore, the no-interference requirement is not the main burden of the RHS optimization. Instead, after each RHS update, we design the digital beamformer and combiner based on the block-diagonalization~(BD)-based framework in Section IV-A, which can completely eliminate both inter-user and inter-stream interference~\cite{Lee_ZF_2004}. This is because as shown in Section IV-A, the digital precoder is designed in the null space of the effective multi-user channel to remove inter-user interference, and the remaining inter-stream interference is further eliminated by the SVD-based transmit/receive design. Therefore, the beamforming solution obtained in each iteration, as well as the final converged solution, satisfies the perfect interference suppression requirement, indicating that the proposed algorithm is robust and the constraints in (\ref{no_inter_stream_interference}) and (\ref{no_inter_user_interference}) do not cause feasibility or convergence issues.}, i.e.,
	\begin{align}
		\label{no_inter_stream_interference}
		\bm{E}_j-\bm{E}_j^{\text{diag}}=\bm{0},~\forall j
	\end{align}
	\begin{align}
		\label{no_inter_user_interference}
		\bm{H}_j\bm{A}\bm{Q}\bm{W}_{j'}=\bm{0},~\forall j'\neq j.
	\end{align}
	Therefore, the received signal in (\ref{rx_signal_combine}) can be further written as
	\begin{align}
		\label{rx_signal_combine_v3}
		\bm{y}_j=\bm{E}_j^{\text{diag}}\bm{s}_j+\bm{D}_j\bm{n}_j.
	\end{align}
	
	Define $\bm{E}_j(k,k)$ as the $k$-th diagonal element in the equivalent channel matrix $\bm{E}_j$ (and thus also the $k$-th diagonal element in matrix $\bm{E}_j^{\text{diag}}$), and use $\bm{d}_j^{(k)}$ to represent the $k$-th row of the receive combining matrix $\bm{D}_j$. Then, (\ref{rx_signal_combine_v3}) indicates that we formulate $K_j$ independent parallel equivalent SISO channels between the BS and user $j$, with the channel gain and noise of the $k$-th equivalent channel given by $\bm{E}_j(k,k)$ and $\bm{d}_j^{(k)}\bm{n}$, respectively. Therefore, we can derive the received signal-to-noise-ratio~(SNR) $\gamma_j^{(k)}$ corresponding to the $k$-th subimage of user $j$ as follows:
	\begin{align}
		\label{SNR}
		\gamma_j^{(k)}=\frac{|\bm{E}_j(k,k)|^2}{\|\bm{d}_j^{(k)}\|_2^2\sigma^2},
	\end{align}
	The received SNR $\gamma_j^{(k)}$ can have an influence on the reconstruction quality $\xi_j^{(k)}$ of the $k$-th subimage given in (\ref{weighted_construction_loss_subfigure}), i.e.,
	\begin{align}
		\xi_j^{(k)}=f_j^{(k)}(\gamma_j^{(k)}),
	\end{align}
	where the function $f_j^{(k)}(\cdot)$ is dependent on the semantic importance of the $k$-th data stream, and will be approximated in closed form through data fitting in the following section. 
	
	\section{Holographic Beamforming Problem Formulation for Semantic Communications}
	\label{sec_problem_formulation}
	In this section, we first derive a closed-form approximation for the image reconstruction loss by using data regression, based on which a holographic beamforming problem is formulated and then decomposed for efficient solutions.
	\subsection{Approximations of Image Reconstruction Loss}
	In the context of image transmission, the performance of semantic communication is measured by the weighted image reconstruction loss in (\ref{weighted_construction_loss_subfigure}) and (\ref{weighted_construction_loss}). However, semantic communication highly relies on neural networks for semantic extraction and recovery, the black-box nature of which hinders theoretical analysis, making the semantic performance, i.e., the weighted reconstruction loss, unable to be acquired precisely. A commonly adopted method for tackling this problem is data regression~\cite{Yan_Resource_2022}, which obtains the mapping from SNR $\gamma_j^{(k)}$ to $\xi_j$ through sufficient experimental instances and curve fitting, as shown below.
	
	\begin{figure}[!t]
		\centering
		\includegraphics[width=0.38\textwidth]{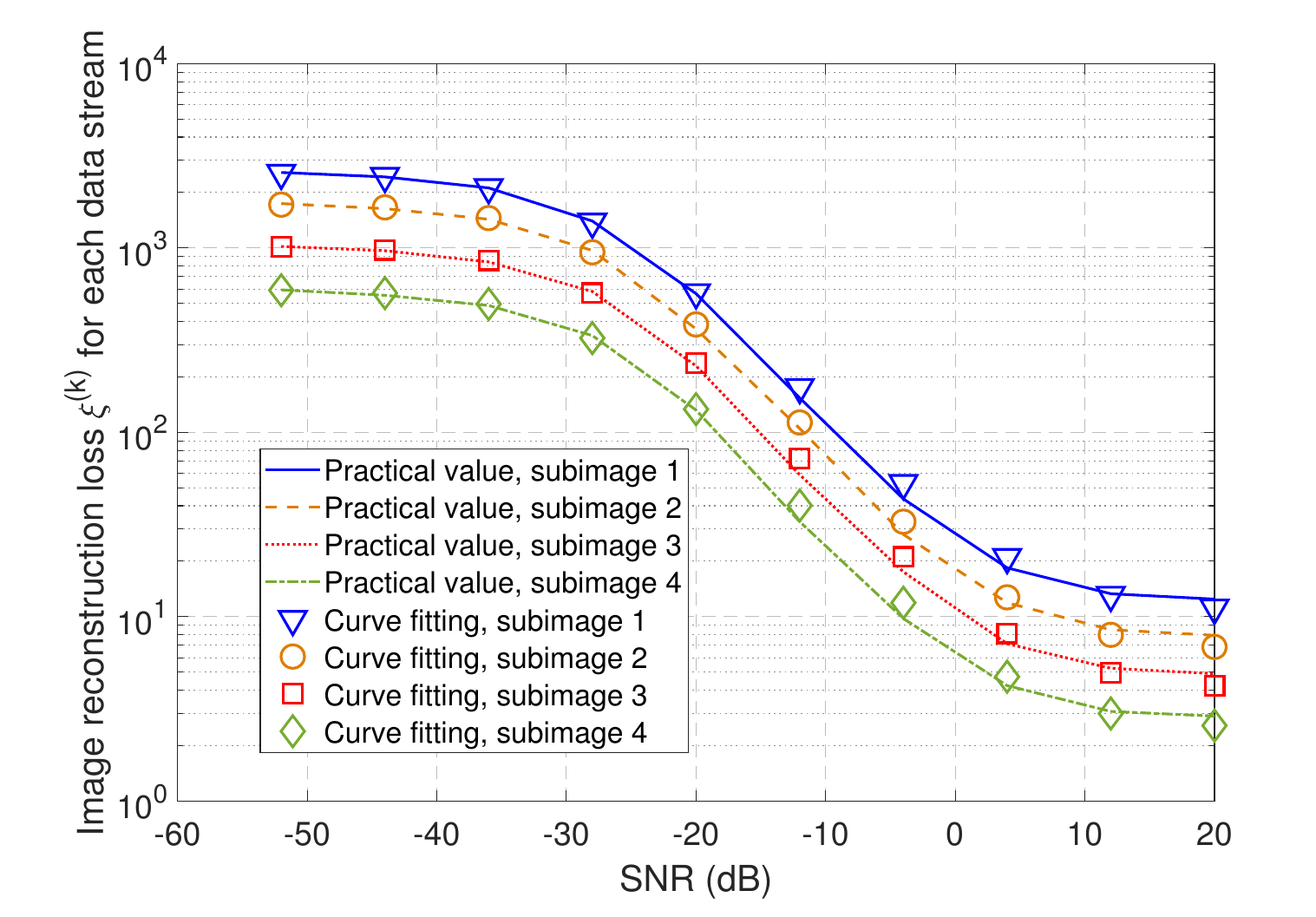}
		\caption{Sub-image reconstruction loss versus SNR. The experiment is conducted on the validation set of the Pascal VOC 2012 dataset. Each image is divided into $K=4$ sub-images, which are then ranked according to their semantic importance. Sub-images with the same rank across different images are grouped together, yielding $K=4$ groups of subimages with distinct average importance levels. Among them, sub-image group $1$ has the highest semantic importance, while group $4$ has the lowest.}
				\vspace{-3mm}
		\label{reconstruction_loss_vs_SNR}
	\end{figure}
	
	Fig.~\ref{reconstruction_loss_vs_SNR} shows the influence of the receive SNR on the image reconstruction loss $\xi_j^{(k)}$ acquired through experiments\footnote{In Fig.~\ref{reconstruction_loss_vs_SNR}, we plot the SNR range of $<-10$~dB in order to demonstrate the overall changing trend of the semantic communication performance with respect to SNR, i.e., the $S$-shape, and validates the selection of fitting function as the logistic function. In addition, for the benchmark algorithms such as the random beamforming algorithm, the receive SNR can drop below $-10$~dB, and thus in order to compare the performance of the proposed algorithm with that of existing methods, we should model the performance when the SNR is below $-10$~dB.}. Here, we acquire the image reconstruction loss via the image dataset Pascal-VOC 2012, which has more than $1000$ images~\cite{Everingham_PASCAL_2012,Quan_Siamese_2024}. Specifically, we divide each image into four sub-images and sort them based on their semantic importance. By grouping together the sub-images that share the same rank across different images, we obtain four sets of sub-images, each characterized by a distinct average importance. For each stream, the contained sub-images are sequentially fed into the semantic encoder\footnote{The training detail of the semantic encoder and decoder will be introduced in the simulation part.}, and the average reconstruction loss $\widetilde{\xi}_j^{(k)}$ over different subimages is measured to approximate the image reconstruction loss $\xi_j^{(k)}$ in (\ref{weighted_construction_loss_subfigure}). In this setup, one single SISO channel is considered between the semantic encoder and decoder, where additive white Gaussian noise is introduced. 
	
	As shown in Fig.~\ref{reconstruction_loss_vs_SNR}, for each data stream, $\xi_j^{(k)}$ follows an $S$-shape with respect to $\gamma_j^{(k)}$, which is also revealed in~\cite{Zhang_Beamforming_2025}. Further, such $S$-shape curve also holds in the text transmission task~\cite{Mu_Het_2023,Getu_Performance_2024}. Therefore, similar to \cite{Mu_Het_2023}, the generalized logistic function could be utilized to well approximate $\xi_j^{(k)}(\gamma_j^{(k)})$, as follows,
	\begin{align}
		\label{fitting_curve}
		\xi_j^{(k)}= f_j^{(k)}(\gamma_j^{(k)})\approx \frac{b_j^{(k)}}{c_j^{(k)}+(\gamma_j^{(k)})^{e_j^{(k)}}}+d_j^{(k)},
	\end{align}
	where $d_j^{(k)}$, $b_j^{(k)}$, $c_j^{(k)}$, and $e_j^{(k)}$ are positive constants obtained through curve fitting\footnote{These fitting parameters are dynamically updated online based on the statistics of the transmitted data.}. From Fig.~\ref{reconstruction_loss_vs_SNR}, we can find that the proposed logistic function in (\ref{fitting_curve}) can accurately characterize the influence of SNR on the image reconstruction loss.
	
	{In addition, by observing Fig.~\ref{reconstruction_loss_vs_SNR}, we can have the following two remarks are in order.}
	\begin{remark}
As the received SNR increases, the weighted reconstruction loss decreases due to the reduced impact of noise on transmission. When the SNR is high, however, the loss saturates since errors from the channel become negligible and the residual distortion mainly stems from semantic encoding and decoding, {which cannot be completely eliminated due to the high compression rate at the semantic encoder~\cite{Thomas_Info_Theory}}. Similarly, in the low-SNR regime, the loss also saturates because both original and reconstructed pixel values are confined to the range $[0, 255]$, thereby bounding the maximum distortion.
	\end{remark}
	
\begin{remark}
	 The subimage groups with higher semantic importance have greater influences on the image reconstruction loss, which motivates us to ensure better transmission quality for sub-image groups with higher semantic importance.
\end{remark}
	
	From Fig.~\ref{reconstruction_loss_vs_SNR}, we can find that the impact of the SNR $\gamma_j^{(k)}$ on the average reconstruction loss $\widetilde{\xi}_j^{(k)}$ (and thus $\xi_j^{(k)}$) differs across data streams, which is due to different levels of semantic importance. Such differences can be fully captured by the different values of the constants $d_j^{(k)}$, $b_j^{(k)}$, $c_j^{(k)}$, and $e_j^{(k)}$ across data streams. Then, we have the following remark:
	\begin{remark}
		The proposed model explicitly accounts for variations in semantic importance across different sub-images by incorporating the semantic importance-dependent parameters, namely $d_j^{(k)}$, $b_j^{(k)}$, $c_j^{(k)}$, and $e_j^{(k)}$, into the image reconstruction loss expression $\xi_j^{(k)}$ in (\ref{fitting_curve}), which further facilitates semantic-importance-aware holographic beamforming problem formulations and algorithm designs, as presented below.
	\end{remark}
	
	\vspace{-.3cm}
	\subsection{Problem Formulation}
	We formulate a reconstruction loss minimization problem by jointly optimizing the digital beamformer $\{\bm{W}_j\}_j$, the holographic beamformer $\bm{A}$ at the RHS (i.e., the radiation amplitudes of different RHS elements), and digital combiner $\{\bm{D}_j\}_j$ at the users, i.e.,
	\begin{subequations}\label{opt_problem}
		\begin{align}
			\label{obj}
			&\min_{\{\bm{W}_j\}_j,\bm{A},\{\bm{D}_j\}_j} \sum_j \xi_j=\sum_j\frac{1}{K_j}\sum_{k=1}^{K_j}\xi_j^{(k)},\\
			\label{cons_max_sum_power}
			s.t.~&\sum_j\Tr(\bm{W}_j\bm{W}_j^{H})\le P,\\
			\label{cons_max_radiated_power}
			&\Tr(\bm{A}\bm{Q}\bm{W}\bm{W}^{H}\bm{Q}^{H}\bm{A}^{H})\le P,\\
			\label{cons_discrete_polarizability}
			&a_{n}\in [a_{min},a_{max}],\\
			&(\ref{no_inter_stream_interference}),(\ref{no_inter_user_interference}).	\notag	
		\end{align}
	\end{subequations}
	Constraint (\ref{cons_max_sum_power}) indicates that the output power of the RF chains at the BS cannot extends $P$. Constraint (\ref{cons_max_radiated_power}) is leakage power constraint, i.e., the power that leaks out of the RHS surface and radiates towards free space cannot exceed the overall transmit power $P$. Constraint (\ref{cons_discrete_polarizability}) implies the range of values for the radiation amplitudes of the RHS elements. Constraints (\ref{no_inter_stream_interference}) and (\ref{no_inter_user_interference}) requires that the inter-stream interference and inter-user interference should be eliminated to guarantee transmission qualities of semantic information. Based on (\ref{opt_problem}), we can find that the digital beamformer $\{\bm{W}_j\}_j$, holographic beamformer $\bm{A}$, and digital combiner $\{\bm{D}_j\}_j$ are coupled, and thus it is non-trivial to solve the hybrid beamforming problem in (\ref{opt_problem}). To cope with this issue, we decompose the hybrid beamforming problem in (\ref{opt_problem}) into two subproblems, as shown in the following.
	\vspace{-.3cm}
	\subsection{Problem Decomposition}
	To efficiently solve problem (\ref{opt_problem}), we decompose it into the following two subproblems, i.e., 
	\subsubsection{Digital Beamforming and Combining} 
	Given the holographic beamforming matrix $\bm{A}$ of the RHS, the digital beamforming and combining subproblem\footnote{We would like to point out that here we do not decompose the digital beamformer $\{\bm{W}_j\}_j$ and the digital combiner $\{\bm{D}_j\}_j$ for separate design. This is because in this case, for the digital beamforming subproblem, we need to cancel both inter-user and inter-stream interference by only designing the digital beamformer $\{\bm{W}_j\}_j$, which can lead to lower receive SNR when compared with the joint design of the digital beamformer and combiner~\cite{Lee_ZF_2004}. This thus motivates us to formulate the joint digital beamformer and combiner design problem as given in (\ref{opt_problem_DBF}).} can be given by
	\begin{subequations}\label{opt_problem_DBF}
		\begin{align}
			\label{obj_DBF}
			&\min_{\{\bm{W}_j\}_j,\{\bm{D}_j\}_j} \sum_j\frac{1}{K_j}\sum_{k=1}^{K_j}\xi_j^{(k)},\\
			\label{cons_max_sum_power_DBF}
			s.t.~&\sum_j\Tr(\bm{W}_j\bm{W}_j^{H})\le P,\\
			\label{cons_max_radiated_power_DBF}
			&\Tr(\bm{A}\bm{Q}\bm{W}\bm{W}^{H}\bm{Q}^{H}\bm{A}^{H})\le P,\\
			&(\ref{no_inter_stream_interference}),(\ref{no_inter_user_interference}).	\notag	
		\end{align}
	\end{subequations}
	
	\subsubsection{Holographic Beamforming}
	Given the digital beamformer $\{\bm{W}_j\}_j$ at the BS and receive combiner $\{\bm{D}_j\}_j$ at users, the holographic beamforming subproblem can be given by
	\begin{subequations}\label{opt_problem_HBF}
		\begin{align}
			\label{obj_HBF}
			&\min_{\bm{A}} \sum_j\frac{1}{K_j}\sum_{k=1}^{K_j}\xi_j^{(k)},\\
			\label{cons_max_radiated_power_HBF}
			s.t.~&\Tr(\bm{A}\bm{Q}\bm{W}\bm{W}^{H}\bm{Q}^{H}\bm{A}^{H})\le P,\\
			\label{cons_discrete_polarizability_HBF}
			&a_{n}\in [a_{min},a_{max}],\\
			&(\ref{no_inter_stream_interference}),(\ref{no_inter_user_interference}).\notag	
		\end{align}
	\end{subequations}
	
	\section{Semantic-Aware Beamforming Algorithm Design}
	\label{sec_algorithm_design}
	In this section, we design algorithms to solve the two subproblems (\ref{opt_problem_DBF}) and (\ref{opt_problem_HBF}), respectively, based on which an overall semantic importance aware algorithm is proposed to solve the reconstruction loss minimization problem in (\ref{opt_problem}).
	\subsection{Digital Beamforming Design}
	\label{sec_digital_BF}
	As indicated in (\ref{no_inter_stream_interference}) and (\ref{no_inter_user_interference}), we aim to mitigate inter-user interference and inter-stream interference so as to improve the transmission quality of semantic information. To achieve this, we adopt the block diagonalization-based digital beamforming and combining framework\footnote{The core contribution of this paper lies in the introduction of a novel semantic-importance-aware beamforming framework, rather than in the development of a specific digital beamforming algorithm. Moreover, this framework remains applicable regardless of whether the digital beamforming is implemented using the BD scheme or alternative schemes such as the regularized zero-forcing~(RZF)~\cite{Peel_RZF_2005}.}~\cite{Lee_ZF_2004}, where the inter-user interference is eliminated by casting the digital beamformer into the null-space of the channel matrix while the inter-stream interference is mitigated through singular value decomposition~(SVD) beamforming. Specifically, the digital beamformer $\{\bm{W}_j\}_j$ and combiner $\{\bm{D}_j\}_j$ that can mitigate the inter-user and inter-stream interference can be given by~\cite{Lee_ZF_2004}
	\begin{align}
		\label{digital_BF}
		\bm{W}_j=\widetilde{\bm{V}}_j^{(0)}\bm{V}_j^{(1)}\bm{O}_j\bm{P}_j^{\frac{1}{2}},\\
		\label{digital_CB}
		\bm{D}_j=\bm{O}_j^T(\bm{U}_j^{(1)})^H.
	\end{align}
	Here, $\widetilde{\bm{V}}_j^{(0)}$, $\bm{V}_j^{(1)}$, and $\bm{U}_j^{(1)}$ are constant matrices determined by the channel matrix $\bm{H}$, the radiation amplitudes $\bm{A}$ of the RHS, and the influence $\bm{Q}$ of the reference wave, which can effectively remove inter-user and inter-stream interference. The definitions of these matrices can be found in Appendix~\ref{app_def_db_dc}.Further, $\bm{P}_j=\diag\{p_j^{(1)},\dots,p_j^{(K_j)}\}$ is the power allocation matrix, and $\bm{O}_j\in \mathbb{R}^{K_j\times K_j}$ is equivalent channel allocation matrix, where the $(k,k')$-th element of matrix $\bm{O}_j(k,k')=1$ if the $k$-th equivalent channel is allocated to transmit the $k'$-th subimage. Otherwise, $\bm{O}_j(k,k')=0$. 
	\begin{remark}
		By utilizing the digital beamforming and combining method given in (\ref{digital_BF}) and (\ref{digital_CB}), $K_j$ equivalent independent parallel single-input-single-output channels are constructed between the BS and user $j$ without inter-stream and inter-user interference. 
		
		The channel gain corresponding to the $k$-th equivalent channel is equal to $\sigma_j^{(k)}$, where the definition of $\sigma_j^{(k)}$ can be found in Appendix~\ref{app_sub_SVD}, while the noise power is $\sigma^2$. In addition, the matrix $\bm{P}_j$ describes the transmit power allocation among the $K_j$ spatial channels and $\bm{O}_j$ indicates how the $K_j$ channels are allocated for the delivery of different sub-images.
	\end{remark}

	Based on (\ref{digital_BF}) and (\ref{digital_CB}), the digital beamforming and combining problem in (\ref{opt_problem_DBF}) can be transformed into the following equivalent channel $\{\bm{O}_j\}_j$ and power allocation $\{\bm{P}_j\}_j$ problem, i.e.,
	\begin{subequations}\label{opt_problem_DBF_v2}
		\begin{align}
			\label{obj_DBF_v2}
			&\min_{\{\bm{P}_j\}_j,\{\bm{O}_j\}_j} \sum_j\frac{1}{K_j}\sum_{k=1}^{K_j}\xi_j^{(k)},\\
			\label{cons_max_sum_power_DBF_v2}
			s.t.~&\sum_j\sum_{k=1}^{K_j}p_j^{(k)}\le P,\\
			\label{cons_binary}
			& \bm{O}_j(k,k')\in\{0,1\},\forall j,k,k'\\
			\label{cons_orthogonal_channel_allocation}
			& \sum_{k=1}^{K_j} \bm{O}_j(k,k')=1,~\sum_{k'=1}^{K_j} \bm{O}_j(k,k')=1, \forall j,k,k',
		\end{align}
	\end{subequations}
	Here, constraint (\ref{cons_binary}) indicates that each element of the channel allocation matrix $\bm{O}_j$ can only take binary values, where $\bm{O}_j(k,k')=1$ indicates that the $k$-th equivalent channel is allocated for the transmission of the $k'$-th subimage, and $\bm{O}_j(k,k')=0$, otherwise. Constraint (\ref{cons_orthogonal_channel_allocation}) shows that each subchannel can only be allocated to one data stream\footnote{We would like to point out that constraint~(\ref{cons_max_radiated_power_DBF}) can be safely omitted for simplicity, since any feasible solution to problem~(\ref{opt_problem_DBF_v2}) already satisfies it. Specifically, constraint~(\ref{cons_max_radiated_power_DBF}) can be rewritten as $\sum_{s=1}^S (\bm{l}_s^H \bm{l}_s)(\bm{W}[s,:]\bm{W}^H[s,:]) \le P$, where $\bm{l}_s$ denotes the $s$-th column of the matrix $\bm{A}\bm{Q}$, and $\bm{W}[s,:]$ represents the $s$-th row of the digital beamformer $\bm{W}$.  Note that $\bm{l}_s^H\bm{l}_s$ quantifies the ratio between the radiated power and the input power corresponding to the $s$-th RHS element driven by the $s$-th feed, which satisfies $\bm{l}_s^H\bm{l}_s \le 1$. Therefore, according to constraint~(\ref{cons_max_sum_power_DBF_v2}), the leakage power constraint~(\ref{cons_max_radiated_power_DBF}) is satisfied.}.
	
	We first consider the design of the equivalent channel allocation matrix $\{\bm{O}_j\}_j$. Without loss of generality, we assume that the channel gains $\sigma_j^{(k)}$ of the $K_j$ equivalent channels for user $j$ satisfy, i.e., $\sigma_j^{(1)}\ge\sigma_j^{(2)}\cdots\ge\sigma_j^{(K_j)}$. Then, we have the following theorem.
	\begin{theorem}
		\label{the_optimal_channel_allocation}
		(Optimal channel allocation) To reduce the image reconstruction loss, the equivalent channels with higher channel gain should be allocated for the transmissions of subimages with higher average semantic importance,~i.e.,
\begin{align}
	\label{optimal_channel_allocation}
	\bm{O}_j^*(k_1,k_2)=\left\{
	\begin{aligned}
		&1,~k_1=\mathcal{I}(k_2),\\
		&0,~\text{otherwise}.
	\end{aligned}
	\right.
\end{align}
where $\mathcal{I}(k_2)$ represents denotes the ranking position of the average semantic importance of the $k_2$-th subimage among all $K_j$ subimages of user $j$. 
	\end{theorem}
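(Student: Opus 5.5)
The argument is an exchange (swap) argument on the permutation $\bm{O}_j$, with the power allocation $\{\bm{P}_j\}_j$ re-optimized afterwards. Fix user $j$ and suppose some allocation assigns channel $k_1$ to subimage $u$ and channel $k_2$ to subimage $v$, where $\sigma_j^{(k_1)}<\sigma_j^{(k_2)}$ but $u$ has higher average semantic importance than $v$. We will show that swapping the two assignments does not increase the objective in (\ref{obj_DBF_v2}). Since any permutation can be sorted by finitely many such swaps, the ranking-matched allocation (\ref{optimal_channel_allocation}) is then optimal.

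\emph{Step 1: setting up the SNR comparison.} By the remark after (\ref{digital_CB}), channel $k$ carrying power $p$ yields SNR $\gamma=p(\sigma_j^{(k)})^2/\sigma^2$. For the swap, we do not exchange the powers blindly; we exchange the \emph{SNRs}. Let $\gamma_u,\gamma_v$ be the SNRs of the two streams before the swap. We look for new powers $p_u',p_v'$ on the swapped channels that give subimage $u$ the SNR $\gamma_v$ and subimage $v$ the SNR $\gamma_u$ (or any other pair that is at least as good) without using more total power. This makes the objective reduce to a comparison of the two fitted curves (\ref{fitting_curve}).

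\emph{Step 2: importance ordering.} Here we invoke the empirical property in Remark 2 and Fig.~\ref{reconstruction_loss_vs_SNR}, namely that higher-importance groups are more sensitive to SNR. We formalize this as an assumption on the fitted parameters: for $u$ more important than $v$, the difference $f_u(\gamma)-f_v(\gamma)$ is nonincreasing in $\gamma$, i.e. $|f_u'|\ge|f_v'|$ on the relevant range, which is a single-crossing or increasing-differences property. Under that assumption, whenever $\gamma_a\ge\gamma_b$,
\begin{align*}
f_u(\gamma_a)+f_v(\gamma_b)\le f_u(\gamma_b)+f_v(\gamma_a).
\end{align*}
This is the standard rearrangement inequality for supermodular functions. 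So assigning the larger SNR to the more important stream is better.

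\emph{Step 3: feasibility of the power split.} We must show that the stronger channel $k_2$ can deliver a higher SNR to $u$ than channel $k_1$ delivers to $v$ within the same power budget. Split the combined power $p_u+p_v$ proportionally so that $u$ on $k_2$ receives $p_u(\sigma_j^{(k_1)})^2/(\sigma_j^{(k_2)})^2$ plus slack, and $v$ keeps its original SNR. The slack, which is nonnegative because $\sigma_j^{(k_2)}\ge\sigma_j^{(k_1)}$, goes to the more important stream. This shows the swapped configuration achieves an objective no larger than the original. Taking the optimum over powers on both sides then gives the claim.

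\textbf{Main obstacle.} Step 2 is the hard part. The fitted curves (\ref{fitting_curve}) have four free parameters per stream, so the increasing-differences property does not follow from the functional form alone. The proof will have to state explicitly what "higher average semantic importance" means in terms of the parameters $b_j^{(k)},c_j^{(k)},d_j^{(k)},e_j^{(k)}$, either as an ordering of $b/c$ with a common $e$, or as dominance of $|f'|$ as observed in Fig.~\ref{reconstruction_loss_vs_SNR}. I would first try the common-$e$ case, where $f_u-f_v$ is a difference of scaled logistic terms and its monotonicity can be checked by a direct derivative calculation.
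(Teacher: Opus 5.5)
Your architecture matches the paper's proof in Appendix~\ref{app_optimal_channel_allocation}. That proof also first swaps channel \emph{and} power together between two streams. It uses the empirical fact that more important streams have larger loss changes (its inequality (\ref{larger_change_loss}), i.e.\ your increasing-differences assumption) to conclude that the more important stream should carry the higher SNR. It then swaps channels while rescaling powers to keep the SNRs, and shows that power is saved. You are right that the sensitivity ordering does not follow from (\ref{fitting_curve}); the paper simply reads it off Fig.~\ref{reconstruction_loss_vs_SNR}. If you make it concrete with a common $e$, note that $|f_u'|\ge|f_v'|$ for all $\gamma\ge 0$ holds iff $b_u\ge b_v$ and $b_u/c_u^2\ge b_v/c_v^2$, because $\sqrt{b_u}(c_v+t)\ge\sqrt{b_v}(c_u+t)$ is linear in $t=\gamma^e$. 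An ordering of $b/c$ alone is not enough.

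The gap is in Step 3: the slack is not nonnegative merely because $\sigma_j^{(k_2)}\ge\sigma_j^{(k_1)}$. Keeping $\gamma_u$ for $u$ on channel $k_2$ and $\gamma_v$ for $v$ on channel $k_1$ changes the total power by
\[
\sigma^2(\gamma_u-\gamma_v)\left(\frac{1}{(\sigma_j^{(k_2)})^2}-\frac{1}{(\sigma_j^{(k_1)})^2}\right),
\]
which is nonpositive only when $\gamma_u\ge\gamma_v$. If $\gamma_u<\gamma_v$, the slack is negative, so $u$ ends up below $\gamma_u$. If moreover $p_u<p_v\big((\sigma_j^{(k_2)})^2/(\sigma_j^{(k_1)})^2-1\big)$, even preserving $v$'s SNR on the weaker channel is infeasible. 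So Step 3 does not hold on its own, and as written it is not linked to Step 2.

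The repair is a case split that your Steps 1--2 already provide. If $\gamma_u<\gamma_v$, move each stream's power along with its channel; this exchanges the SNRs at zero power cost, and Step 2 shows the objective does not increase. If $\gamma_u\ge\gamma_v$, rescale the powers to preserve both SNRs; this saves the nonnegative amount above, which you can give to either stream since each $f$ is decreasing. Equivalently, after the swap give $u$ the SNR $\max(\gamma_u,\gamma_v)$ and $v$ the SNR $\min(\gamma_u,\gamma_v)$, which never costs more power than before. This is exactly why the paper establishes the SNR ordering first and only then uses it in its power-change formula. With this fix, your bubble-sort conclusion goes through.
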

	\begin{proof}
		See Appendix~\ref{app_optimal_channel_allocation}.
	\end{proof}
	
	Then, given the optimal equivalent channel allocations in (\ref{optimal_channel_allocation}), the power allocation problem can be given by
	\begin{subequations}\label{opt_problem_DBF_v3}
		\begin{align}
			\label{obj_DBF_v3}
			&\min_{\{p_j^{(k)}\}} \sum_j\frac{1}{K_j}\sum_{k=1}^{K_j}\xi_j^{(k)},\\
			\label{cons_max_sum_power_DBF_v3}
			s.t.~&\sum_j\sum_{k=1}^{K_j}p_j^{(k)}\le P.
		\end{align}
	\end{subequations}
By substituting the optimal subchannel allocation in (\ref{optimal_channel_allocation}), (\ref{digital_BF}), (\ref{digital_CB}), and $\bm{E}_j=\bm{D}_j\bm{H}_j\bm{A}\bm{Q}\bm{W}_j$ into (\ref{SNR}), the received SNR corresponding to the $k$-th subimage can be rewritten as
	\begin{align}
		\label{SNR_v2}
		\gamma_j^{(k)}=\frac{(\sigma_j^{(\mathcal{I}(k))})^2p_j^{(k)}}{\sigma^2}.
	\end{align}
	By substituting (\ref{SNR_v2}) into (\ref{fitting_curve}), the objective function (\ref{obj_DBF_v3}) of the power allocation problem can be rewritten as
	\begin{align}
		\sum_j\frac{1}{K_j}\!\sum_{k=1}^{K_j}\xi_j^{(k)}\!=\!\sum_{j,k}\frac{1}{K_j}\!\left(\frac{b_j^{(k)}}{c_j^{(k)}\!+\!(\frac{(\sigma_j^{(\mathcal{I}(k))})^2p_j^{(k)}}{\sigma^2})^{e_j^{(k)}}}\!+\!d_j^{(k)}\right).
	\end{align}	
	The minimization problem in (\ref{opt_problem_DBF_v3}) is a constrained optimization problem, which can be efficiently solved through the interior point method~\cite{Boyd_convex}. Specifically, when $e_j^{(k)}<1,\forall j,k$, the power allocation problem in (\ref{opt_problem_DBF_v3}) is convex and thus the interior point method is guaranteed to converge to the global optimal solution. The overall digital beamforming and combining algorithm is summarized in Algorithm~\ref{algorithm_digital_BF_CB}
	\begin{algorithm}[!tpb]
		\caption{Digital beamforming and combining design}
		\label{algorithm_digital_BF_CB} 
		\begin{algorithmic}[1]
			\REQUIRE RHS radiation amplitudes $\bm{A}$
			\STATE Derive the optimal channel allocation $\{\bm{O}_j^*\}$ by (\ref{optimal_channel_allocation})
			\STATE Derive the optimal power allocation $\{(p_{j}^{(k)})^*\}$ by solving problem (\ref{opt_problem_DBF_v3}) through the interior method
			\STATE Derive the optimal digital beamformer $\{\bm{W}_j^*\}$ and combiner $\{\bm{D}_j^*\}$ by (\ref{digital_BF}) and (\ref{digital_CB})
			\ENSURE{The optimal digital beamformer $\{\bm{W}_j^*\}$ and combiner $\{\bm{D}_j^*\}$} 
		\end{algorithmic}
	\end{algorithm}
	
	Moreover, in the low SNR regime, we can even get a closed form expression for the optimal solution to this problem, as shown in the following theorem.
{
	\begin{theorem}
		\label{the_optimal_power_allocation}
		(Optimal power allocation) 
		In the high SNR regime $\gamma_j^{(k)} \gg (c_j^{(k)})^{\frac{1}{e_j^{(k)}}}$ and when the fitting parameter $e_j^{(k)}<1$, the optimal solution to the power allocation problem in (\ref{opt_problem_DBF_v3}) is given by
		\begin{align}
			\label{power_allocation}
		(p_j^{(k)})^* = \left( \frac{e_j^{(k)} b_j^{(k)}}{\mu} \right)^{\frac{1}{e_j^{(k)} + 1}} \left( \frac{\sigma^2}{(\sigma_j^{(o(k))})^2} \right)^{\frac{e_j^{(k)}}{e_j^{(k)} + 1}},
		\end{align}
		Here, $\mu$ is Lagrange multiplier, which is selected such that the overall transmit power is equal to the maximum allowed power, i.e., $\sum_j\sum_kp_j^{(k)}=P$. 		
	\end{theorem}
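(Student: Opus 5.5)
In the high-SNR regime $\gamma_j^{(k)} \gg (c_j^{(k)})^{1/e_j^{(k)}}$, the constant $c_j^{(k)}$ in the denominator of (\ref{fitting_curve}) is negligible. The plan is to drop it, show the resulting power allocation problem is convex, and read the optimum off the KKT conditions.

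\emph{Step 1: simplify the objective.} Using (\ref{SNR_v2}), write $g_j^{(k)} = (\sigma_j^{(o(k))})^2/\sigma^2$, where $o(k)$ denotes the channel index $\mathcal{I}(k)$ assigned by Theorem~\ref{the_optimal_channel_allocation}. Then $\gamma_j^{(k)} = g_j^{(k)} p_j^{(k)}$. The objective of (\ref{opt_problem_DBF_v3}) becomes, up to additive constants,
\begin{align*}
\sum_{j,k}\frac{1}{K_j}\, b_j^{(k)} (g_j^{(k)})^{-e_j^{(k)}} (p_j^{(k)})^{-e_j^{(k)}}.
\end{align*}
Each term is a positive multiple of $p^{-e}$ with $e>0$. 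Such a term is strictly convex and strictly decreasing on $p>0$. The simplified problem is therefore convex with a linear constraint, and Slater's condition clearly holds.

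\emph{Step 2: the power constraint is tight.} Because every term is strictly decreasing in $p_j^{(k)}$, any leftover power could be given to some stream to strictly lower the objective. Hence the constraint binds at the optimum, $\sum_{j,k}p_j^{(k)}=P$, and its multiplier satisfies $\mu>0$.

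\emph{Step 3: solve the KKT conditions.} Form the Lagrangian
\begin{align*}
L = \sum_{j,k} \tilde b_j^{(k)} (g_j^{(k)} p_j^{(k)})^{-e_j^{(k)}} + \mu\Big(\sum_{j,k} p_j^{(k)} - P\Big).
\end{align*}
Here $\tilde b_j^{(k)}$ absorbs the $1/K_j$ factor; following the theorem's statement, I will either fold $1/K_j$ into $b_j^{(k)}$ or into $\mu$ when $K_j$ is the same for all users. Setting $\partial L/\partial p_j^{(k)}=0$ gives
\begin{align*}
e_j^{(k)} b_j^{(k)} (g_j^{(k)})^{-e_j^{(k)}} (p_j^{(k)})^{-e_j^{(k)}-1} = \mu .
\end{align*}
Solving for $p_j^{(k)}$ yields (\ref{power_allocation}), with exponents $1/(e_j^{(k)}+1)$ on $e_j^{(k)}b_j^{(k)}/\mu$ and $e_j^{(k)}/(e_j^{(k)}+1)$ on $1/g_j^{(k)}$.

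\emph{Step 4: determine $\mu$.} The map $\mu\mapsto\sum_{j,k}p_j^{(k)}(\mu)$ is continuous and strictly decreasing. It tends to $\infty$ as $\mu\to0^+$ and to $0$ as $\mu\to\infty$. So a unique $\mu$ satisfies $\sum_{j,k}p_j^{(k)}=P$, and it can be found by bisection. Convexity then makes this stationary point the global optimum.

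\emph{Main obstacle.} The hardest part is justifying that dropping $c_j^{(k)}$ is legitimate and consistent. The high-SNR condition must hold at the resulting allocation, which I would state as an assumption verified a posteriori. I also need to clarify the role of $e_j^{(k)}<1$. In the exact problem, $b/(c+(gp)^e)$ is convex for $e\le1$, so the interior-point claim holds there. For the approximate problem, convexity holds for any $e>0$, so the condition is needed only for consistency with the exact model. The $1/K_j$ weighting is a bookkeeping issue to handle when matching the stated formula.
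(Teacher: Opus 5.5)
Your argument is correct. It differs from the paper's at the one step that matters: why the nonnegativity constraints $p_j^{(k)}\ge 0$ are inactive.

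The paper keeps multipliers $\lambda_j^{(k)}$ for these constraints and proves $\lambda_j^{(k)}=0$ by a perturbation argument on the \emph{exact} logistic model. Moving a small amount $\Delta$ of power from a stream with $p_0>0$ to a stream with zero power raises the loss at a finite rate. It lowers the loss at an infinite rate, because $\Delta^{e_j^{(k)}}/\Delta\to\infty$ when $e_j^{(k)}<1$. So every stream gets positive power, complementary slackness gives $\lambda_j^{(k)}=0$, and the stationarity condition (\ref{stationary_v2}) is then solved exactly as you do.

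You instead work only with the approximated objective. There the $p^{-e}$ terms act as a barrier, so interiority is automatic. You also establish convexity, tightness of the power constraint, and strict monotonicity of $\mu\mapsto\sum_{j,k}p_j^{(k)}(\mu)$. These give global optimality and a unique $\mu$, which the paper leaves implicit; it only uses necessity of KKT under Slater's condition.

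What each route buys:
\begin{itemize}
\item Your route is more self-contained for the approximate problem. It correctly shows that $e_j^{(k)}<1$ plays no role in deriving (\ref{power_allocation}) once $c_j^{(k)}$ is dropped.
\item The paper's route gives a model-level fact: every stream receives positive power in the exact problem, in any SNR regime. That fact is what Remark~\ref{remark_cmp} (the contrast with water-filling, stated for low SNR) actually rests on, and your argument does not supply it.
\end{itemize}

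On the $1/K_j$ weight, the paper's stationarity condition (\ref{stationary_v2}) silently drops it too. Absorbing it into $\mu$ (when all $K_j$ are equal) or into $b_j^{(k)}$ is therefore consistent with how the stated formula was obtained.
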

	\begin{proof}
		See Appendix~\ref{app_optimal_power_allocation}.
	\end{proof}}
	\begin{remark}
		\label{remark_cmp}
		From (\ref{power_allocation}), we can observe that the power allocation strategy in the RHS-aided semantic communication system differs from the water-filling scheme used in existing RHS-aided bit-level communication networks~\cite{Deng_RHS_multi_user_2022} when the transmit SNR is low and $e_j^{(k)}<1$.
		
		In particular, unlike the water-filling solution, which avoids allocating power to equivalent channels with poor transmission quality (i.e., small channel gain $\sigma_j^{(k)}$), in the semantic communication system, transmit power should be assigned to all equivalent channels to reduce the image reconstruction loss and improve semantic communication performances.
		
	\end{remark}
	{We further provide additional physical insights to clarify why the power allocation strategy here differs from the classical water-filling principle, as discussed in Remark~\ref{remark_cmp}. The key reason lies in the distinct characteristics of the underlying objective functions. Specifically, in the considered semantic communication network, the objective is to minimize the image reconstruction loss through optimized power allocation. It is worth noting that, for each sub-image stream, the reconstruction loss saturates when the transmit power becomes sufficiently large. In contrast, when the allocated power is small, the reconstruction loss increases significantly. Therefore, instead of assigning zero power to equivalent channels with poor transmission quality and concentrating power solely on high-quality channels, allocating a certain amount of power to the weaker channels can effectively reduce the overall reconstruction loss across all data streams. Differently, conventional bit-level communication systems aim to maximize the channel capacity. The capacity of each subchannel increases unboundedly with transmit power. In addition, for subchannels with poor channel conditions, the capacity gain achieved by allocating power to these subchannels is limited. Consequently, to maximize the sum capacity across all subchannels, it is optimal to allocate zero power to low-quality channels and concentrate power on better ones, leading to the classical water-filling solution.}
	
	\subsection{Holographic Beamforming Design}
	{Given the digital beamforming and combining, we try to design a semantic-importance aware holographic beamforming algorithm to solve the subproblem in (\ref{opt_problem_HBF}). Since the the digital beamforming and combining are fixed here, the transmit power of the spatial channels and their allocations are given here. Therefore, we optimize the configurations of the RHS to adjust the channel gains of these spatial channels, so that the received SNR of each subimage can better adapt to its corresponding semantic importance and the over image reconstruction loss can be minimized.}
	
	
	Note that in the digital beamforming subproblem, we have adopted the block-diagonalization-based beamforming framework, as shown in (\ref{digital_BF}) and (\ref{digital_CB}), which can effectively mitigate the inter-user and inter-stream interference. Therefore, we neglect constraints (\ref{no_inter_stream_interference}) and (\ref{no_inter_user_interference}) for the holographic beamforming here. Note that the objective function in (\ref{obj_HBF}) is non-convex with respect to the RHS configuration $\bm{A}$ mainly due to the term $(\gamma_j^{(k)})^{e_j^{(k)}}$ in the denominator. {Therefore, we introduce auxiliary variables $x_j^{(k)}$ to replace the term $(\gamma_j^{(k)})^{e_j^{(k)}}$ and move the term into constraint}, i.e.,
	\begin{subequations}\label{opt_problem_HBF_v2}
		\begin{align}
			\label{obj_HBF_v2}
			&\min_{\bm{A},\{x_j^{(k)}\}_{j,k}} \sum_j\frac{1}{K_j}\sum_{k=1}^{K_j}\left(\frac{b_j^{(k)}}{c_j^{(k)}+x_j^{(k)}}+d_j^{(k)}\right),\\
			\label{cons_max_radiated_power_HBF_v2}
			s.t.~&\Tr(\bm{A}\bm{Q}\bm{W}\bm{W}^{H}\bm{Q}^{H}\bm{A}^{H})\le P,\\
			\label{cons_discrete_polarizability_HBF_v2}
			&a_{n}\in [a_{min},a_{max}],\\
			\label{cons_aux_cons}
			&x_j^{(k)}\le
			 (\gamma_j^{(k)})^{e_j^{(k)}},~\forall j,k,
		\end{align}
	\end{subequations}
	where the objective function in (\ref{obj_HBF_v2}) is now a convex function. However, the newly introduce constraint (\ref{cons_aux_cons}) is still non-convex, making it non-trivial to solve problem (\ref{opt_problem_HBF_v2}).
	
	To address this issue, we employ the successive convex programming~(SCP) method~\cite{Zeng_IOS_2022}, which iteratively solves a sequence of convex optimization subproblems to approach the solution of problem~(\ref{opt_problem_HBF_v2}). In each iteration, the non-convex constraint in~(\ref{cons_aux_cons}) is approximated by a convex function, thereby constructing a tractable convex subproblem. In the following, we classify the constraint in~(\ref{cons_aux_cons}) into two cases depending on the value of the parameter  $e_j^{(k)}$, i.e., $e_j^{(k)} \le 1$ and $e_j^{(k)} > 1$, and explain how to transform the non-convex constraint into a convex one in each case.
	
	
	\subsubsection{$e_j^{(k)}\le 1$}
	First, we rewrite constraint (\ref{cons_aux_cons})~as
	\begin{align}
		\label{cons_aux_cons_v2}
		(x_j^{(k)})^{\frac{1}{e_j^{(k)}}}-\gamma_j^{(k)}\le 0.
	\end{align}
	Since $e_j^{(k)}\le 1$, the first term $(x_j^{(k)})^{\frac{1}{e_j^{(k)}}}$ is a convex function with respect to the auxiliary variable $x_j^{(k)}$. Further, by substituting the expression of digital combiner in (\ref{digital_CB}) and $\bm{E}_j=\bm{D}_j\bm{H}_j\bm{A}\bm{Q}\bm{W}_j$ into (\ref{SNR}), SNR of $k$-th data stream is
	\begin{align}
		\gamma_j^{(k)}=\frac{1}{\sigma^2}|\sum_{n=1}^{N}l_{n}a_{n}|^2,
	\end{align} 
	where $l_{n}$ are constants and $a_{n}$ is the radiation amplitude of the $n$-th RHS element. Therefore, we can find that $\gamma_j^{(k)}$ is convex with respect to the radiation amplitudes of the RHS elements. 
	
	Note that the constraint function in (\ref{cons_aux_cons_v2}) is in the form of a difference of two convex functions. Therefore, in the $o$-th iteration, to transform it into a convex constraint, we replace $\gamma_j^{(k)}$ with its first-order Taylor expansion, which is given by,
	\begin{align}		&\mathcal{T}_o(\gamma_j^{(k)})=\gamma_j^{(k)}\left(\{(a_{n})_{(o-1)}\}\right)\notag\\
&+\sum_{n,n'=1}^{N}\left(2\Re(l_{n'}l_n)(a_{n'})_{(o-1)}\right)\times\left(a_{n}-(a_{n})_{(o-1)}\right),
	\end{align}
	where $\{(a_{n})_{(o-1)}\}$ is the derived radiation amplitudes of the RHS elements in the $(o-1)$-th iteration, and $\Re(\cdot)$ represents the real part of a number. Correspondingly, constraint (\ref{cons_aux_cons_v2}) is transformed into the following convex one,
	\begin{align}
		\label{cons_aux_cons_v3}
		(x_j^{(k)})^{\frac{1}{e_j^{(k)}}}-\mathcal{T}_o(\gamma_j^{(k)})\le 0,
	\end{align}
	
	\subsubsection{$e_j^{(k)}> 1$}
	Similar to the case with parameter $e_j^{(k)}\le 1$, when $e_j^{(k)}> 1$, we can also rearrange the constraint in (\ref{cons_aux_cons}) into the (\ref{cons_aux_cons_v2}). However, unlike the previous case, the first term $(x_j^{(k)})^{\frac{1}{e_j^{(k)}}}$ is now a concave function. Therefore, the overall constraint function in (\ref{cons_aux_cons_v2}) is concave. Similarly, to transform the constraint function in (\ref{cons_aux_cons_v2}) into a convex one, we can utilize its first-order Taylor expansion $\mathcal{T}_o((x_j^{(k)})^{\frac{1}{e_j^{(k)}}}-\gamma_j^{(k)})$.
	Then, in the $o$-th iteration, the non-convex constraint in (\ref{cons_aux_cons_v2}) is transformed into the following convex constraint,
	\begin{align}
		\mathcal{T}_o\left((x_j^{(k)})^{\frac{1}{e_j^{(k)}}}-\gamma_j^{(k)}\right)\le 0.
	\end{align}
	
	Based on the above discussions, we can derive the convex optimization subproblem to be solved in the $o$-th iteration\footnote{Constraint (\ref{cons_max_radiated_power_HBF_v4}) is derived from (\ref{cons_max_radiated_power_HBF_v2}). Specifically, constraint (\ref{cons_max_radiated_power_HBF_v2}) can be rewritten as $\sum_{s=1}^S (\bm{l}_s^H \bm{l}_s)(\bm{W}[s,:]\bm{W}^H[s,:]) \le P$, where $\bm{l}_s$ denotes the $s$-th column of the matrix $\bm{A}\bm{Q}$, and $\bm{W}[s,:]$ represents the $s$-th row of the digital beamformer $\bm{W}$. According to (\ref{cons_max_sum_power_DBF}), we have $\sum_{s=1}^S(\bm{W}[s,:]\bm{W}^H[s,:]) \le P$. Therefore, to satisfy the leakage power constraint, we require that $\bm{l}_s^H \bm{l}_s\le1$. Then, by substituting $\bm{l}_s^H \bm{l}_s=\sum_{n=1}^{N_s}|q_{n,s}|^2a_{(s-1)N_s+n}^2$, where $q_{n,s}$ describes the amplitude and phase changes from the $s$-th feed to the $n$-th RHS element, we can derive the constraint in (\ref{cons_max_radiated_power_HBF_v4}).},
	\begin{subequations}\label{opt_problem_HBF_v4}
		\begin{align}
			\label{obj_HBF_v4_case2}
			&\max_{\bm{A},\{x_j^{(k)}\}_{j,k}} \sum_j\frac{1}{K_j}\sum_{k=1}^{K_j}\left(\frac{b_j^{(k)}}{c_j^{(k)}+x_j^{(k)}}+d_j^{(k)}\right),\\
			\label{cons_discrete_polarizability_HBF_v4_case2}
			s.t.~&a_{n}\in [a_{min},a_{max}],\\
			\label{cons_max_radiated_power_HBF_v4}
			&\sum_{\hat{n}=1}^{N_s}|q_{\hat{n},s}|^2a_{(s-1)N_s+\hat{n}}^2\le 1,\\
			& (x_j^{(k)})^{\frac{1}{e_j^{(k)}}}-\mathcal{T}_o(\gamma_j^{(k)})\le 0, \forall j,k\in\{(j,k)|e_j^{(k)}\le1\},\\
			& \mathcal{T}_o\left((x_j^{(k)})^{\frac{1}{e_j^{(k)}}}-\gamma_j^{(k)}\right)\le 0,\forall j,k\in\{(j,k)|e_j^{(k)}>1\}.
		\end{align}
	\end{subequations}
	The overall holographic beamforming algorithm is summarized in Algorithm~\ref{algorithm_holographic}
	\begin{algorithm}[!tpb]
		\caption{Holographic beamforming design}
		\label{algorithm_holographic} 
		\begin{algorithmic}[1]
			\REQUIRE Digital beamformer $\{\bm{W}_j\}$ and combiner $\{\bm{D}_j\}$
			\STATE Initialize RHS radiation amplitudes $\bm{A}^{(0)}$ and $o=1$
			\REPEAT
			\STATE Optimize RHS radiation amplitudes $\bm{A}$ by solving convex optimization problem (\ref{opt_problem_HBF_v4}), and denote the optimal solution as $\bm{A}^{(o)}$
			\STATE Set $o=o+1$
			\UNTIL{Convergence}
			\ENSURE{The optimal RHS radiation amplitudes $\bm{A}$}
		\end{algorithmic}
	\end{algorithm}

\vspace{-0.2cm}
\subsection{Overall Algorithm Description}
Based on the algorithms presented in the previous two subsections, we design a joint beamforming algorithm to solve problem (\ref{opt_problem}) iteratively. Specifically, in each iteration, the radiation amplitudes $\bm{A}$ of the RHS elements is first optimized through holographic beamforming algorithm in Algorithm~\ref{algorithm_holographic}, given the digital beamformer $\{\bm{W}_j\}_j$ and combiner $\{\bm{D}_j\}_j$. Then, given the holographic beamformer $\bm{A}$, the digital beamformer $\{\bm{W}_j\}_j$ and digital combiner $\{\bm{D}_j\}_j$ are jointly optimized through Algorithm~\ref{algorithm_digital_BF_CB}. The iterations are performed until the value difference of the image reconstruction loss between two adjacent iterations is less than a predefined threshold. The overall algorithm is summarized in Algorithm~\ref{algorithm_all}.

{
\subsection{Extended Discussion}
\subsubsection{Computational complexity}: On the one hand, according to Algorithms~\ref{algorithm_digital_BF_CB}, the complexity of performing SVD on the equivalent channel matrix to eliminate inter-user and inter-stream interference in the proposed digital beamforming scheme is $\mathcal{O}(J^3M^2S + JM^2(S-JM))$. In addition, the complexity of the sub-channel allocation is $\mathcal{O}(\sum_{j} K_j \log K_j)$, and the complexity of the power allocation algorithm is $\mathcal{O}((\sum_{j} K_j)^3)$. Therefore, the total computational complexity of the digital beamforming scheme is $\mathcal{O}(J^3M^2S + JM^2(S-JM) + \sum_{j} K_j \log K_j + (\sum_{j} K_j)^3)$. On the other hand, the holographic beamforming scheme is based on the SCP method. Assuming a total of $I_H$ iterations, where each iteration employs the interior-point method to solve a constrained convex optimization problem with a complexity of $\mathcal{O}((N+JK)^3)$, where $N$ is the number of RHS elements, $J$ is the number of users, and $K$ is the number of data streams transmitted per user. Therefore, the complexity of the holographic beamforming algorithm is $\mathcal{O}(I_H(N+JK)^3)$. Assuming that the digital beamforming and holographic beamforming are solved alternately for $I$ iterations. Then, the total complexity of the proposed semantic-importance-aware holographic beamforming algorithm is $\mathcal{O}(I(J^3M^2S + JM^2(S-JM) + \sum_{j} K_j \log K_j + (\sum_{j} K_j)^3 + I_H(N+JK)^3))$.

\subsubsection{Discussion on beam alignment}
The proposed approach can promote beam alignment between the transmitter and the receivers. Notice that the semantic reconstruction loss is strictly inversely proportional to the receive signal-to-noise ratio (SNR) of each data stream. Therefore, to minimize the reconstruction loss, our optimization algorithm naturally forces the beams to steer toward the users to maximize the transmission SNR of the data streams. Thus, the proposed approach intrinsically facilitates beam alignment. When extended to dynamic scenarios with high user mobility, the beam alignment problems can be addressed by periodically updating instantaneous channel state information and re-executing the proposed algorithm. Specifically, the transmitter can periodically acquire the updated instantaneous channel state information~(CSI) with low overhead by leveraging advanced compressive sensing-based channel estimation methods (e.g.,~\cite{Yue_Hybrid_2025}). Once the newly estimated CSI is obtained, our proposed alternating optimization algorithm can be re-executed to adaptively update the RHS radiation amplitudes along with digital beamformer and combiner. Through this process, the proposed system can effectively track user movements and maintain beam alignment in dynamic environments.

\subsubsection{Comparison with phased array}
To fully capture system-level differences between the RHS and the phased array, we also compare power consumption, resolution, control overhead, and real-time reconfiguration capability between them. Specifically, (a) unlike phased arrays, which rely on power-hungry phase shifters, RHS achieves beamforming through amplitude control using low-power PIN diodes. As a result, the power consumption of RHS is lower than that of phased arrays~\cite{Deng_RHS_multi_user_2022}. (b) Although the resolution of phase shifters in conventional phased arrays is generally higher than that of RHS, existing studies~\cite{Hu_how_many_2022} have shown that, when the number of RHS elements is sufficiently large, the beamforming gain achieved by a 1-bit RHS can approach that of an RHS with continuous amplitude control, thereby avoiding performance loss caused by finite resolution in RHS systems. (c) In terms of control overhead, to achieve high-precision phase control, phased arrays generally require relatively high control overhead for phase shifter operation, e.g., 8-bit control. By contrast, each RHS element only performs 1-bit amplitude control, resulting in lower control overhead. (d) Since RHS performs beamforming using PIN diodes that support high-speed switching, it offers strong real-time reconfiguration capability. For example, in~\cite{Di_RHS_IMT}, the beam switching speed of RHS can reach 2~$\mu$s.

\subsubsection{Discussion on energy efficiency}
Assuming an identical number of antenna elements, a fully digital scheme can achieve higher spectral efficiency compared to the RHS-based hybrid beamforming while the power consumption of the RHS-based scheme is significantly lower, where the overall energy efficiency of the fully digital beamforming scheme is lower than the proposed RHS-based hybrid beamforming scheme. In addition, compared to conventional bit-based communication, semantic communication drastically reduces the number of symbols required for transmission, thereby significantly lowering signal transmission power consumption.
}

\begin{algorithm}[!tpb]
	\caption{Semantic importance aware joint beamforming algorithm}
	\label{algorithm_all}
	\begin{algorithmic}[1] 
		\STATE Initialize digital beamformer $\{\bm{W}_j\}_j$, RHS radiation amplitudes $\bm{A}$, and digital combiner $\{\bm{D}_j\}_j$
		\STATE Set $t=1$		
		\REPEAT
		\STATE Given $\{\bm{W}_j^{(t-1)}\}_j$ and $\{\bm{D}_j^{(t-1)}\}_j$, calculate $\bm{A}^{(t)}$ by solving problem (\ref{opt_problem_HBF}) with Algorithm~\ref{algorithm_holographic};
		\STATE  Given $\bm{A}^{(t)}$, calculate $\{\bm{W}_j^{(t)}\}_j$ and $\{\bm{D}_j^{(t)}\}_j$ by solving problem (\ref{opt_problem_DBF}) using Algorithm~\ref{algorithm_digital_BF_CB};
		\STATE Update $t=t+1$
		\UNTIL{Convergence}		
		\ENSURE Optimal digital beamformer $\{\bm{W}_j^*\}_j$, RHS radiation amplitudes $\bm{A}^*$, and digital combiner $\{\bm{D}_j^*\}_j$.
	\end{algorithmic}
\end{algorithm}
%
%
\vspace{-.2cm}
\section{Simulation Results}
\label{sec_simulation}
In this section, we evaluate the performance of the proposed semantic importance-aware joint beamforming algorithm. Major simulation parameters are set up based on existing works~\cite{Deng_RHS_multi_user_2022,Zeng_both_2021}. Specifically, the center frequency is set to $f=30$~GHz, with the corresponding wavelength given by $\lambda=1$~cm. Noise variance is $\sigma^2=-96$~dBm. The number of RF chains at the BS is $S=8$, where each RF chain is connected to one feed of the RHS. Each feed is connected to one row of $N_{sub}=20$ RHS elements, and thus the RHS consists of $N=160$ elements. The maximum radiation amplitude of one RHS element is set as $a_{max}=\sqrt{0.2}$ and the activation probability of RHS elements is $p_{on}=0.5$. The BS serves $J=2$ users, where it transmits $K_j=4$ data streams simultaneously to each user through spatial multiplexing. The location of the two users are given by $(r,\theta,\phi)=(100~m, 76.8^\circ, -38.7^\circ), (130~m, 77.3^\circ, -13.0^\circ)$, respectively. Channels between the RHS and users are assumed to be Rayleigh faded, with pathloss exponent set as~$4$.


Each image is partitioned uniformly into $K_j=4$ subimages. The semantic importance of each subimage is acquired through the semantic segment anything~(SSA) project~\cite{Chen_SSA_2023}. At the same time, these subimages are processed by the encoder $f_E(\cdot;\bm{\psi}_E)$, which consists of three convolutional layers for source encoding and two ResNet blocks for channel encoding~\cite{Yang_Deep_2022}. At the receiver side, the decoder is composed of two ResNet blocks followed by three convolutional layers.

We leverage the Pascal-VOC 2012~(VOC) dataset~\cite{Everingham_PASCAL_2012}, a widely used benchmark for semantic segmentation with over ten thousand images. Each image pixel in the dataset is assigned one of 21 classes (20 objects plus background). The training split is based on the training set in the VOC data set, where the additive white Gaussian noise~(AWGN) channel is considered, and the SNR is fixed to $10$~dB. The weighted image reconstruction loss in (\ref{weighted_construction_loss}) is selected as the loss function and Adam optimizer with the initiate learning rate of $10^{-4}$ is adopted for training. To isolate the impact of our methodology, we focus on the ``Human" object class, thus setting $C = {15}$ in relation to its definition within the full set of VOC dataset classes. 
In addition, $\theta=0.9$ is the weighting factor for the important pixels. In the test split, the validation set in the VOC dataset is adopted. Further, the $K=4$ subimages generated from each image are then ranked according to their semantic importance. Sub-images with the same rank across different images are grouped together, yielding $K=4$ groups of subimages with distinct average importance levels. Among them, sub-image group $1$ has the highest semantic importance, while group $4$ has the lowest. 

\begin{figure}[!t]
	\centering
	\includegraphics[width=0.4\textwidth]{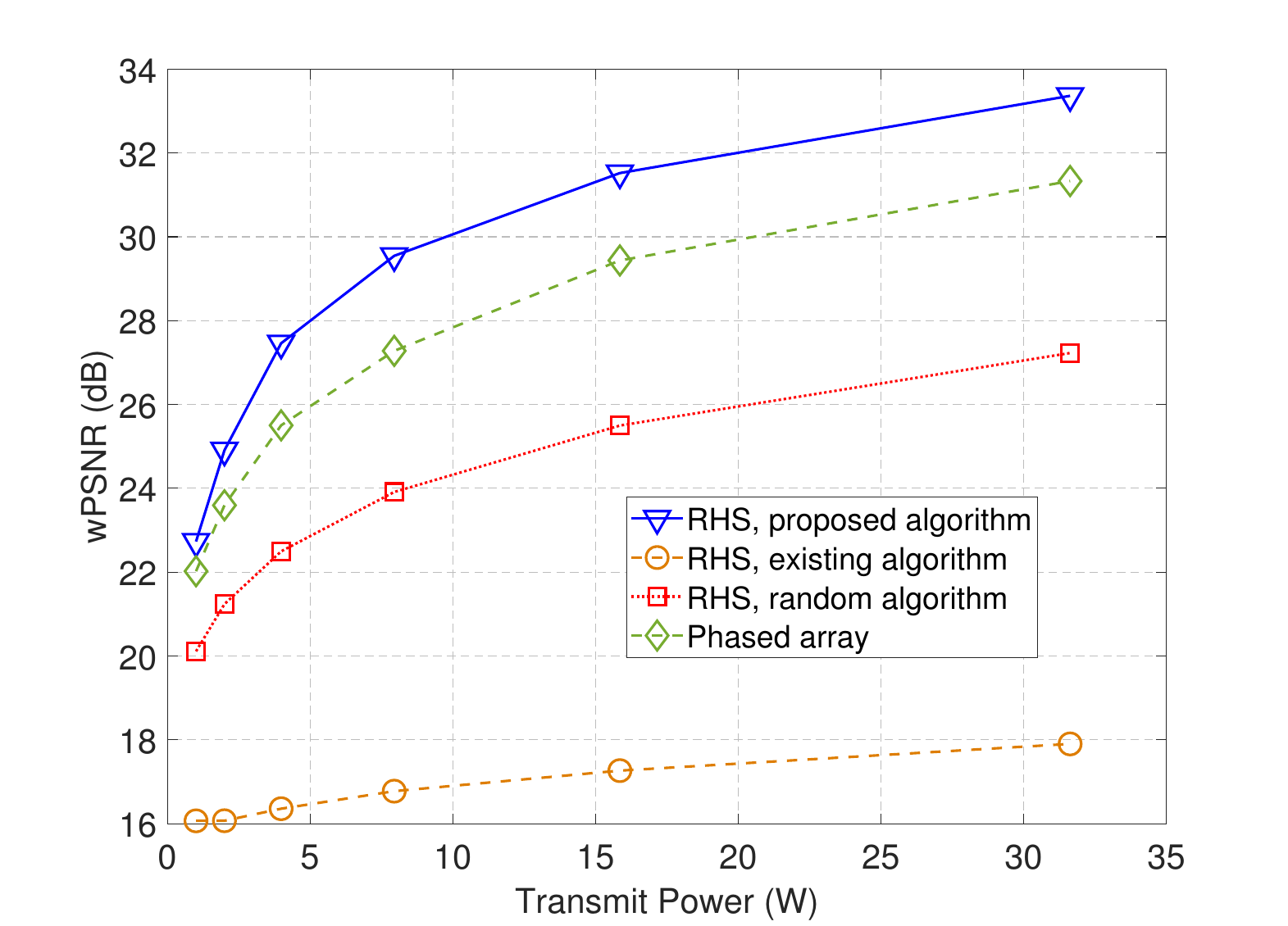}
	\vspace{-1.5mm}
	\caption{wPSNR for image reconstructions versus transmit power with $2$ users. The first three schemes are applied to the considered RHS-aided semantic communication system, where the ``existing algorithm" refers to the holographic beamforming algorithm developed for conventional communication systems. The last one refers to the phased array enabled semantic communication counterpart.}
	\vspace{-5mm}
	\label{wPSNR_vs_SNR}
\end{figure}

\begin{figure}[!t]
	\centering
	\includegraphics[width=0.44\textwidth]{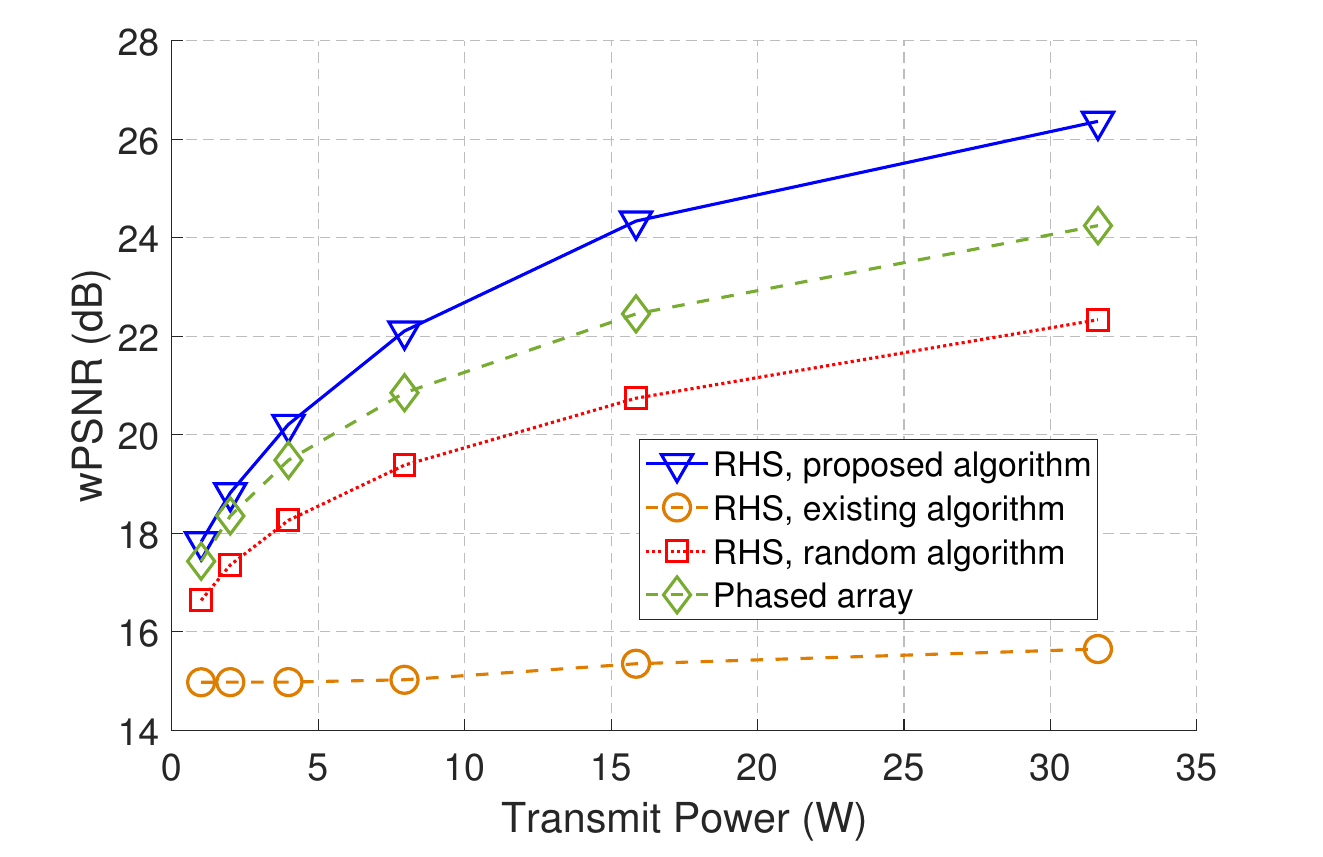}
	\vspace{-1.5mm}
	\caption{wPSNR for image reconstructions versus transmit power with $8$ users. The first three schemes are applied to the considered RHS-aided semantic communication system, where the ``existing algorithm" refers to the holographic beamforming algorithm developed for conventional communication systems. The last one refers to the phased array enabled semantic communication counterpart.}
	\vspace{-5mm}
	\label{wPSNR_vs_SNR_8_user}
\end{figure}

Fig.~\ref{wPSNR_vs_SNR} shows the image reconstruction accuracy of the proposed semantic-aware holographic beamforming algorithm. The image reconstruction accuracy is measured by weighted peak signal to noise ratio~(wPSNR)\footnote{In the simulation results, wPSNR is employed instead of the weighted reconstruction loss because it provides a more intuitive interpretation of reconstruction quality for readers, while being mathematically equivalent to the weighted reconstruction loss.}, {which is defined as the reciprocal of the weighted image reconstruction loss in (\ref{weighted_construction_loss}), i.e., $\frac{1}{\sum_{k=1}^{K_j}\xi_j^{(k)}}$. Such definition has also been adopted in existing works~\cite{Erfurt_study_2019,Zhang_Feature_2026}}. To show the effectiveness of the proposed algorithm, the performances of the three benchmark schemes are also provided for comparison:{
\begin{itemize}
	\item \textbf{RHS-aided random beamforming schemes}: This benchmark adopts the same overall algorithmic framework as our proposed approach. The key distinction lies in the design of the holographic beamforming matrix. Instead of actively optimizing the radiation amplitudes of the RHS elements as detailed in Algorithm~\ref{algorithm_holographic}, this baseline simply selects these amplitudes at random from the feasible set $[a_{min},a_{max}]$. Once the random radiation amplitudes are generated and fixed, the subsequent digital beamforming and combining subproblems are solved utilizing the exact same methodology as our proposed scheme, i.e., Algorithm~\ref{algorithm_digital_BF_CB}. This benchmark serves to demonstrate the specific performance gain brought by our proposed holographic beamforming optimization in Algorithm~\ref{algorithm_holographic}.
	
	\item \textbf{Holographic beamforming schemes developed for conventional communication systems}: While this baseline also uses an alternating optimization framework to update the digital beamformer and combiner along with RHS amplitudes, its subproblems are solved by maximizing the conventional sum rate instead of minimizing the semantic-aware reconstruction loss. Consequently, it ignores the varying semantic importance of different data streams. Because both subproblems are formulated as constrained optimization problems for sum-rate maximization, they are directly solved using the interior-point method via the commercial solver fmincon in MATLAB. Finally, to evaluate the semantic communication performance, the beamforming solutions generated by this sum-rate-driven algorithm are applied to our proposed semantic communication system to compute the resulting wPSNR. This benchmark is introduced to demonstrate that beamforming design in semantic communication systems should explicitly consider semantic importance.
	
	\item \textbf{Phased-array-based beamforming schemes}: Conventional phased arrays are utilized to facilitate semantic MIMO transmissions, where a subconnected architecture is utilized. To ensure a fair comparison, the hardware cost of the phased array is the same as that of an RHS. Similar to our proposed approach, this scheme employs an alternating optimization framework to iteratively update the digital beamforming/combining matrices and the phase-shifter-based analog beamforming matrix. The analog beamforming is performed by optimizing the phase shifters of the phased array to minimize the weighted reconstruction loss through the interior-point method, while the digital beamforming and combining subproblem is solved using the same algorithm in the paper, i.e., Algorithm~\ref{algorithm_digital_BF_CB}. This benchmark is introduced to demonstrate the performance superiority of the proposed RHS-aided semantic communication system over traditional phased arrays..
\end{itemize}}

From Fig.~\ref{wPSNR_vs_SNR}, we can find that the proposed algorithm outperforms the random algorithm, which demonstrates the effectiveness of the proposed holographic beamforming algorithm in Algorithm~\ref{algorithm_holographic}. Further, the proposed algorithm also outperforms the beamforming algorithm developed for conventional bit-communication systems, since the conventional algorithm does not consider the semantic communication performance metric, i.e., image reconstruction loss, leading to performance degradation. In addition, we can find that by applying the proposed beamforming algorithm, the potential of the RHS can be fully uncovered, which outperforms its phased array counterpart. This is because we assume that the hardware cost of the phased array is the same as that of an RHS, and thus the number of RHS elements (i.e., $N=160$ elements) is much larger than that of the phased array elements (i.e., $N=16$ elements), leading to larger radiation aperture and beamforming gain for the RHS\footnote{According to~\cite{Zhang_HISAC_2022}, the number of phased-array elements can be determined based on the number of RHS elements and the ratio of the per-unit manufacturing cost of an RHS element to that of a phased-array element. Here, the ratio of the per-unit cost of RHS elements to phased-array elements is set to $0.1$.}. {We have extended our simulations to a more complex multi-user scenario consisting of $8$ users, as shown in Fig.~\ref{wPSNR_vs_SNR_8_user}. As shown in Fig.~\ref{wPSNR_vs_SNR_8_user}, our proposed method still outperforms the baseline schemes, which demonstrates the effectiveness of the proposed method.}

{
\begin{figure}[!t]
	\centering
	\includegraphics[width=0.43\textwidth]{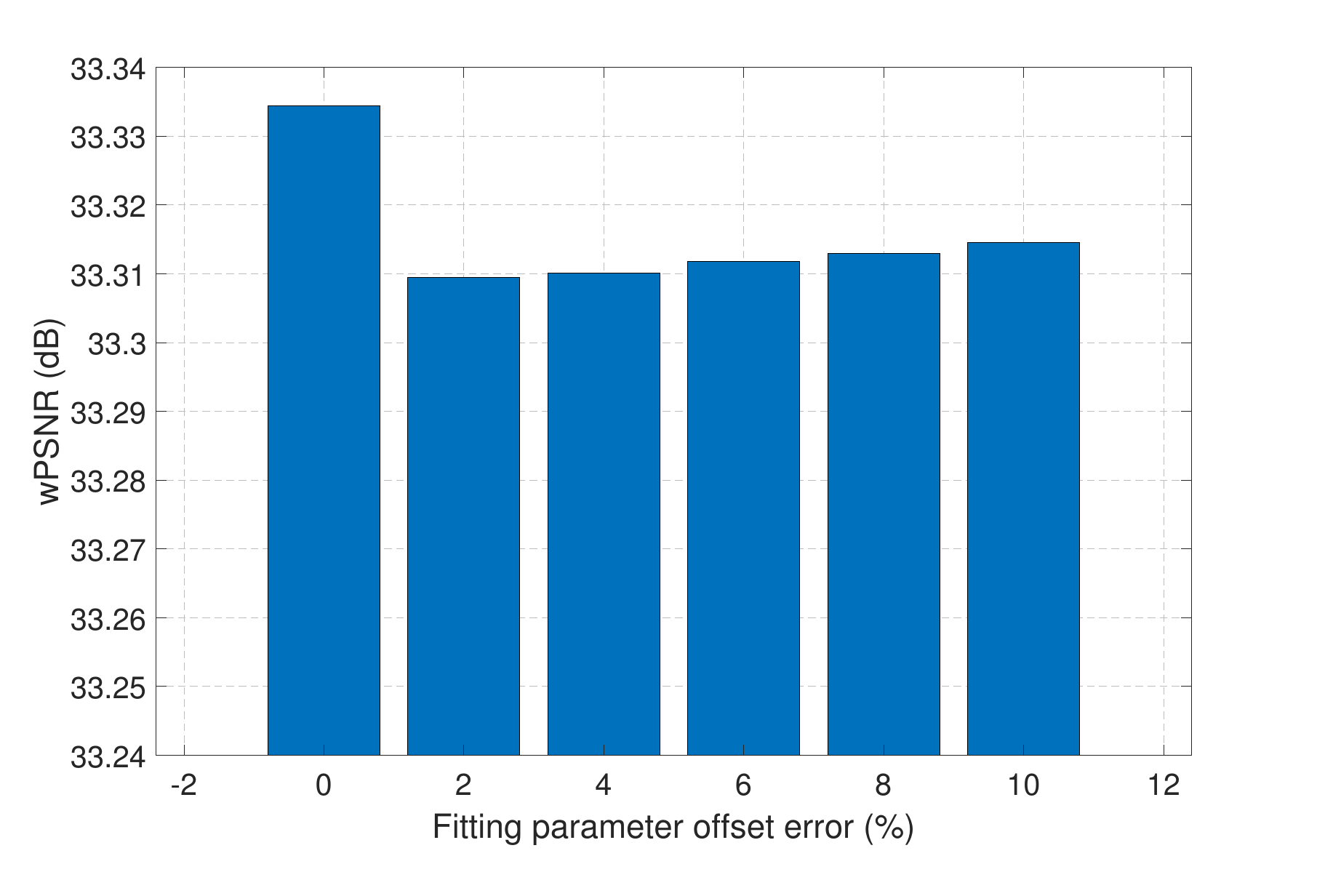}
	\vspace{-1.5mm}
	\caption{Impact of fitting parameter offset errors on the image reconstruction performance of the proposed semantic-aware holographic beamforming scheme.}
	\vspace{-1mm}
	\label{fitting_error}
\end{figure}}

{
Fig.~\ref{fitting_error} presents a robustness study by introducing error offsets to the fitted curve parameters. Specifically, the fitting parameters used in the curve-fitting model are perturbed by different offset errors and then redesign the beamforming based on the perturbed parameters. Then, we evaluate the performance of the beamforming solution based on the real fitting parameters without offset. The simulation results show that the proposed beamforming design is not sensitive to such fitting errors, i.e., when the fitting-parameter offset error increases up to \(10\%\), the achieved wPSNR reduces only from about \(33.33\)~dB to \(33.31\)~dB. This demonstrates that the proposed method is robust to fitting errors.}

\begin{figure}[!t]
	\centering
	\includegraphics[width=0.39\textwidth]{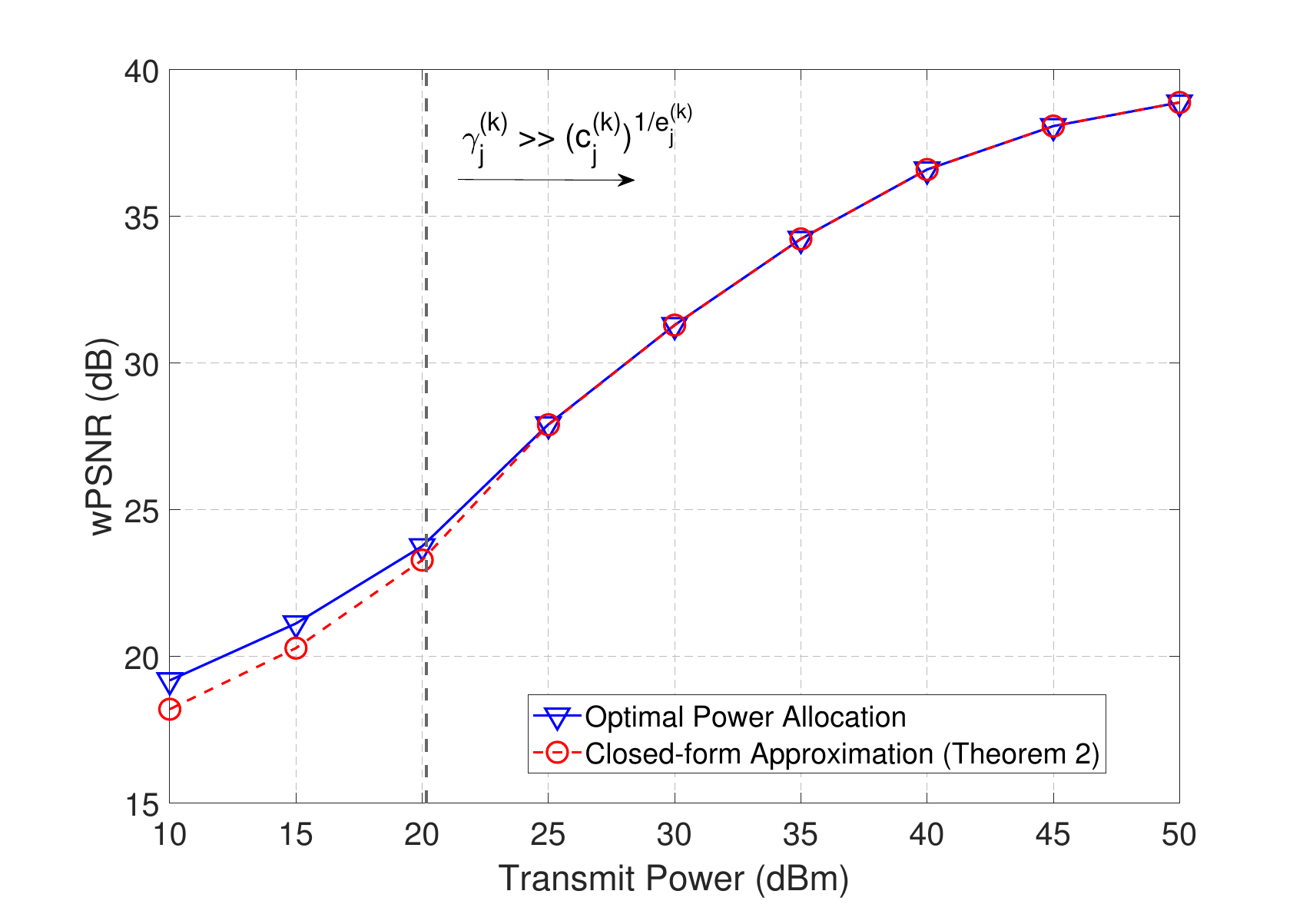}
	\vspace{-1.5mm}
	\caption{Comparison of image reconstruction performance between the closed-form approximation derived in Theorem~2 and the optimal power allocation scheme. The region on the right of the dotted vertical line corresponds to the received SNR condition satisfying $\gamma_j^{(k)}\gg (c_j^{(k)})^{\frac{1}{e_j^{(k)}}}$, i.e., $\gamma_j^{(k)}\ge 10\times(c_j^{(k)})^{\frac{1}{e_j^{(k)}}}$.}
	\vspace{-3mm}
	\label{wPSNR_cmp}
\end{figure}

{
To verify Theorem~\ref{the_optimal_power_allocation} and to demonstrate the performance loss outside the high SNR regime, we compare the image reconstruction performance between the closed-form approximation derived in Theorem~\ref{the_optimal_power_allocation} and the optimal power allocation scheme, as shown in Fig.~\ref{wPSNR_cmp}. Here, the transmit power indicated by the dotted vertical line corresponds to an average received SNR $\gamma_j^{(k)}= 10\times(c_j^{(k)})^{\frac{1}{e_j^{(k)}}}$, and thus the right regime of the dotted vertical line corresponds to the received SNR condition satisfying $\gamma_j^{(k)}\gg (c_j^{(k)})^{\frac{1}{e_j^{(k)}}}$, where we could find that the system performance achieved by the closed-form power allocation solution in Theorem~\ref{the_optimal_power_allocation} approaches that under optimal power allocation, thus verifying  Theorem~\ref{the_optimal_power_allocation}. In addition, we could find that when the received SNR condition does not hold, the performance gap induced by the approximation in Theorem~\ref{the_optimal_power_allocation} is still acceptable. For example, when the transmit power (and thus the received SNR) is $10$~dB lower than the derived threshold, the system performance loss is less than $5\%$.}
\begin{figure}[!tpb]
	\centering
	
	\subfigure[Channel allocations in semantic communication systems, where the equivalent channels with higher gain are allocated to sub-image groups with higher semantic importance.]{
		\begin{minipage}[b]{0.21\textwidth}
			\centering
			\includegraphics[width=1\textwidth]{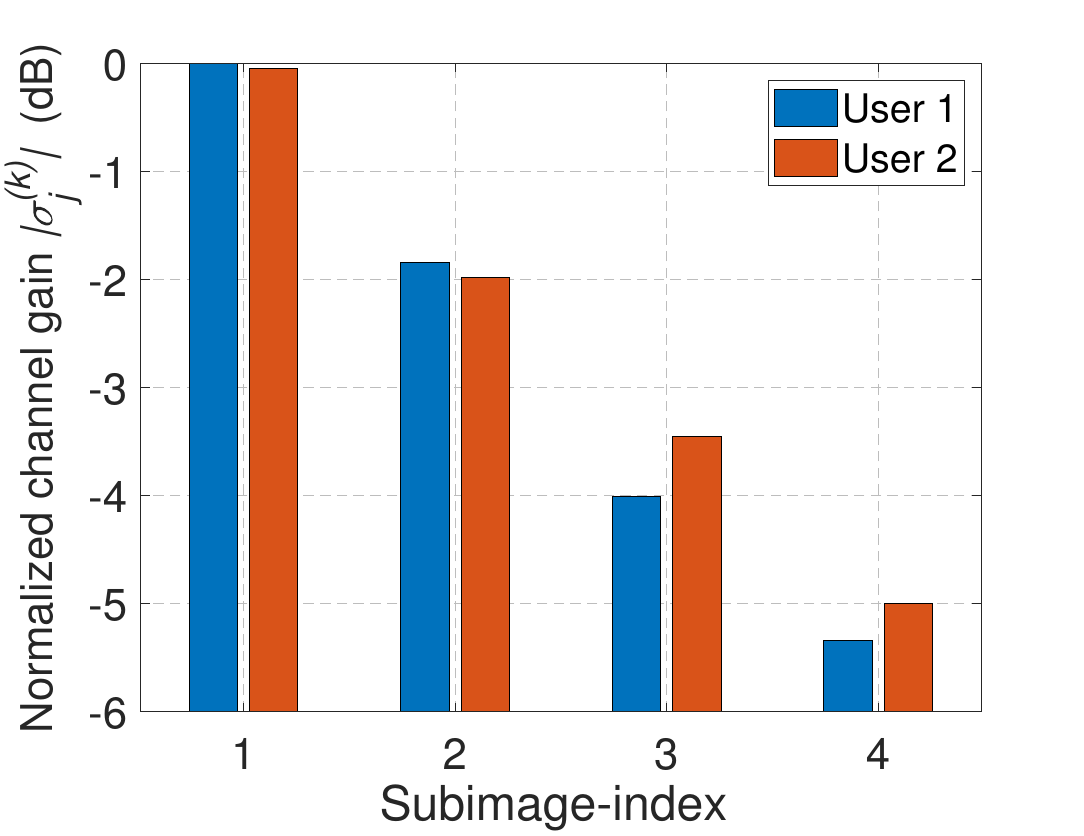}
						\vspace{-0.6cm}
			\label{fig_channel_allocation_semcom}
	\end{minipage}}
	\hspace{0.02\textwidth}
	\subfigure[Channel allocations in conventional communication systems, where the channels are randomly allocated to different sub-images regardless of their semantic importance.]{
		\begin{minipage}[b]{0.21\textwidth}
			\centering
			\vspace{-0.4cm}
			\includegraphics[width=1\textwidth]{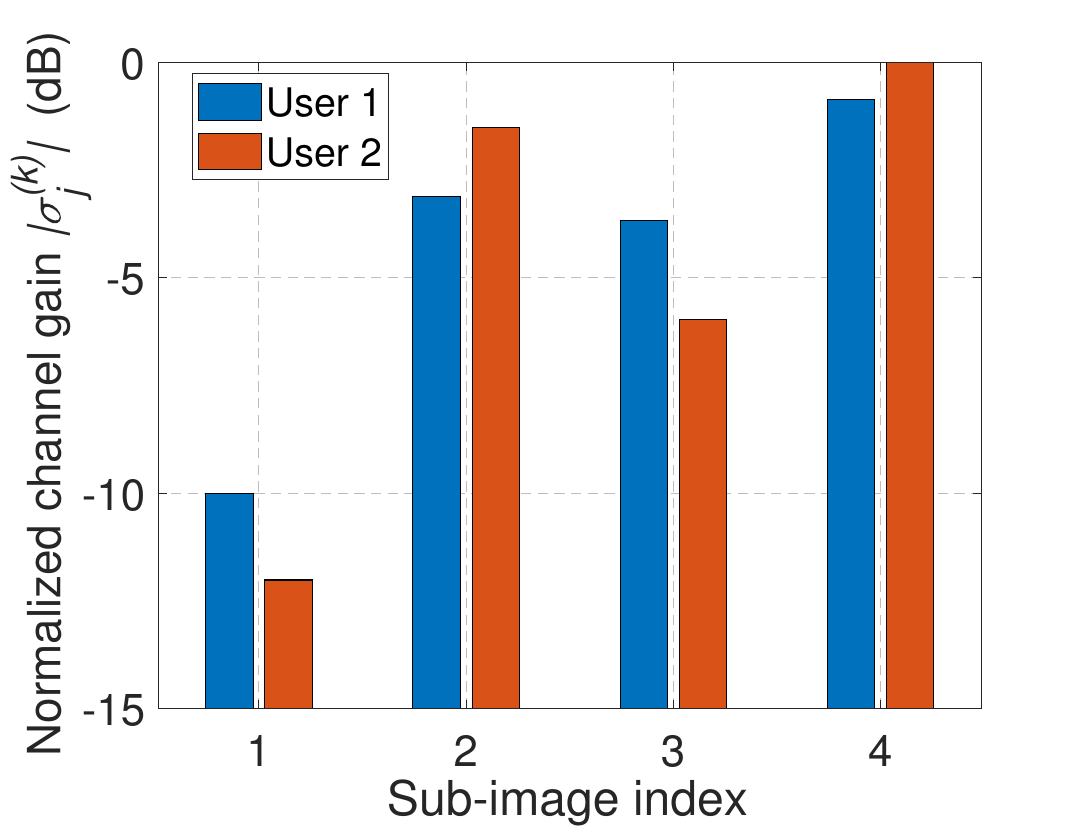}
						\vspace{-0.6cm}
			\label{fig_channel_allocation_convcom}
	\end{minipage}}
	
	\vspace{-0.1cm}
	\subfigure[Power allocations in semantic communication systems, where the channels with low gains should also be  allocated transmit power.]{
		\begin{minipage}[b]{0.21\textwidth}
			\centering
			\includegraphics[width=1\textwidth]{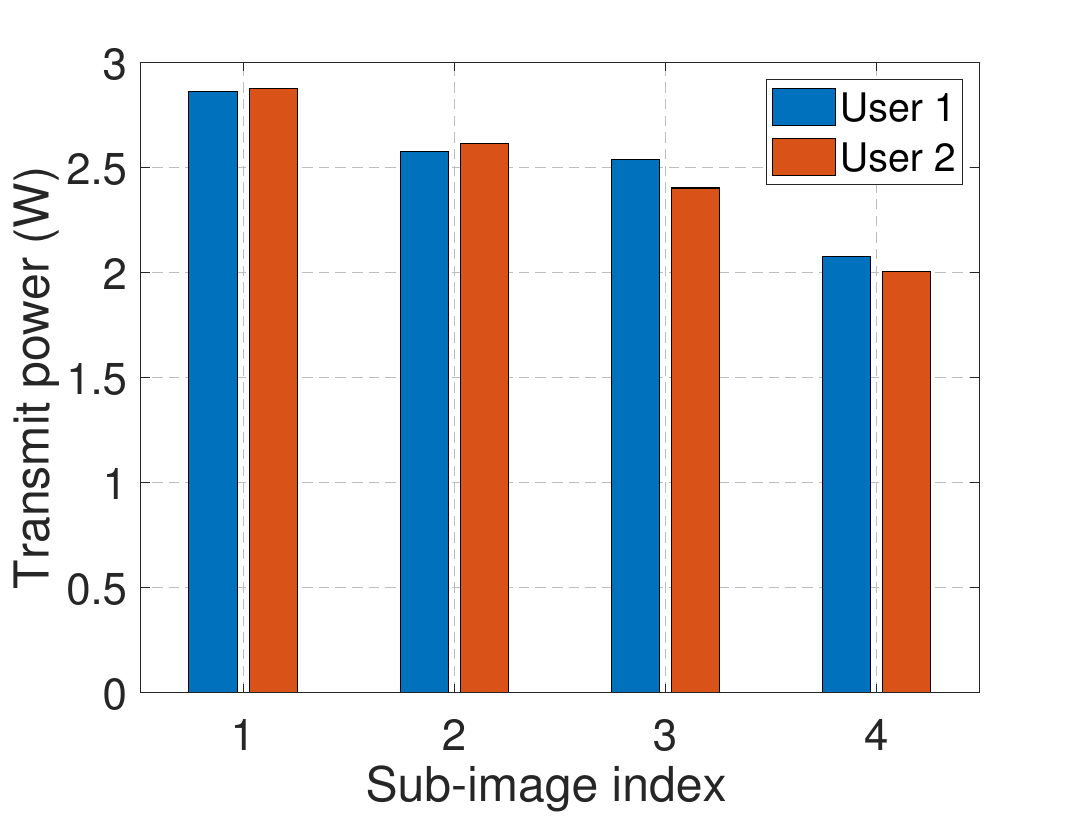}
						\vspace{-0.6cm}
			\label{fig_power_allocation_semcom}
	\end{minipage}}
	\hspace{0.02\textwidth}
	\subfigure[Power allocations in conventional communication systems, where the water-filling scheme is utilized, and thus transmit power will not be allocated to channels of low gains.]{
		\begin{minipage}[b]{0.21\textwidth}
			\centering
			\vspace{-0.4cm}
			\includegraphics[width=1\textwidth]{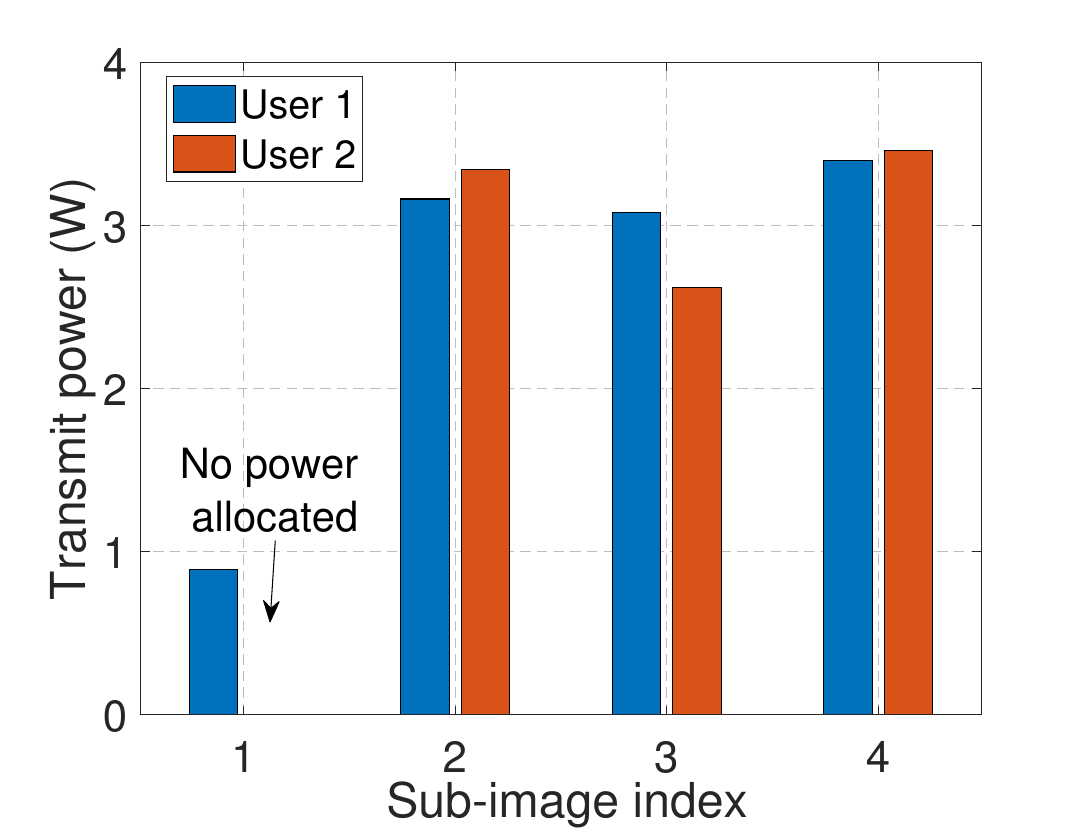}
						\vspace{-0.6cm}
			\label{fig_power_allocation_convcom}
	\end{minipage}}
		\vspace{-0.2cm}
	\caption{Comparison of channel allocation and power allocation schemes between semantic communication systems and conventional bit-communication systems. The semantic importance of the subimage groups decreases with their index, i.e., subimage group $1$ has the highest importance while $4$ has the lowest importance. In contrast, in conventional bit-communication, semantic importance is ignored.}
	\label{fig_power_allocation_comparison}
\end{figure}

Fig.~\ref{fig_power_allocation_comparison} compares the channel and power allocation schemes between the considered semantic communication systems and the conventional communication systems. The transmit power is set as $45$~dBm. The semantic importance of the subimage group decreases with its index, i.e., subimage group $1$ and $4$ has the highest and lowest importance. From Fig.~\ref{fig_channel_allocation_semcom}, we can find that in the semantic communication system, equivalent channels with higher gains $|\sigma_j^{(k)}|$ are allocated to data streams with higher semantic importance. However, for conventional communication systems, the semantic importance is neglected during the channel allocations, as shown in Fig.~\ref{fig_channel_allocation_convcom}. Further, by comparing Fig.~\ref{fig_power_allocation_semcom} against Fig.~\ref{fig_power_allocation_semcom}, we can find that unlike the conventional communication systems, which adopt the water-filling power allocation scheme, and thus the equivalent channels with low gains may not be allocated any transmit power, in the considered semantic communication system, each equivalent channel is allocated transmit power. This result is consistent with Remark~\ref{remark_cmp}\footnote{Simulation results show that the fitting curve parameters $e_j^{(k)}<1, \forall j,k$. Further, the received SNR of different data streams is low, ranging from $-9$~dB to $-1$~dB.}.

\begin{figure*}[!t]
	\centering
	\includegraphics[width=0.9\textwidth]{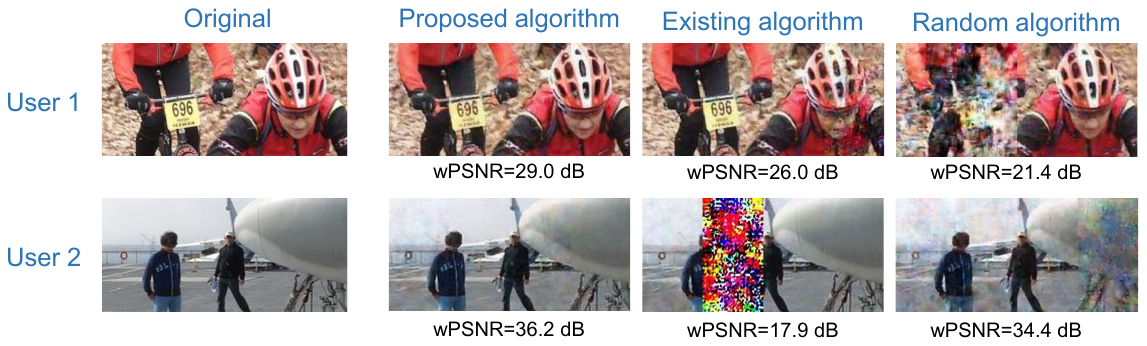}
	\caption{Examples of the original image and the reconstructed images by using different schemes.}
	\vspace{-4mm}
	\label{figure_dif_scheme}
\end{figure*}

To visualize the performance of different schemes, an example is provided in Fig.~\ref{figure_dif_scheme}. As observed, the proposed holographic beamforming achieves the highest image recovery accuracy. In contrast, for the existing holographic beamforming algorithm designed for conventional communication systems, the recovered image at user~1 shows distortion in the most critical region, i.e., the human face, since the semantic importance of different image components is not considered. Moreover, for the recovered image at user~2, one subimage containing the human face and body is completely distorted. This is because the conventional scheme employs water-filling power allocation, where channels with poor quality receive little or no transmit power. In addition, the random algorithm yields lower overall image quality than the proposed method, as the RHS is randomly configured and its beamforming potential cannot be fully exploited, resulting in degraded received SNR.


\vspace{-.4cm}

\section{Conclusion}
\vspace{-.1cm}
\label{sec_conclusion}


In this paper, we have studied an RHS-aided multi-user semantic communication system for image transmission. To reduce latency, each image has been divided into multiple sub-images for parallel transmissions, where semantic information has been extracted from each sub-image and its semantic importance has been evaluated through a semantic-segmentation-based method. To enable the simultaneous transmissions of multi-stream semantic information to multiple users, we have proposed a hybrid beamforming framework that combines digital beamforming at the BS, holographic beamforming at the RHS, and digital combining at the users. A tractable approximation of the image reconstruction loss has been derived, revealing the joint impact of semantic importance and received SNR. Based on this, we have formulated and solved a semantic-aware holographic beamforming problem through a semantic-importance-aware method. Based on our analysis and numerical results, the following conclusions can be drawn:
\begin{itemize}
	\item Unlike conventional RHS-aided bit-communications, where the equivalent spatial channels are randomly allocated among different transmit bit sequences, we theoretically prove that the equivalent channels with high channel gains should be allocated for transmitting semantic information of higher importance in RHS-aided semantic communication systems.
	
	\item A closed-form power allocation scheme is derived for the low SNR regime. The derived scheme shows that, unlike the water-filling power allocation scheme developed for conventional bit-communication systems, where the transmit bit sequences experiencing poor channel conditions will not be allocated transmit power in low SNR regime, all transmitted semantic information should be allocated transmit power in semantic communication systems.
	\item Through the proposed semantic-aware holographic beamforming scheme, the RHS-aided semantic communication system outperforms conventional phased-array-enabled semantic MIMO communication systems given the same hardware costs.
\end{itemize}

\begin{appendices}
	\section{Definitions of Matrices Related to Digital Beamformer and Combiner in (\ref{digital_BF}) and (\ref{digital_CB})}
	\label{app_def_db_dc}
	In this section, we introduce the definitions of the constant matrices $\widetilde{\bm{V}}_j^{(0)}$, $\bm{V}_j^{(1)}$, and $\bm{U}_j^{(1)}$ present in block diagonalization-based beamforming framework in (\ref{digital_BF}), (\ref{digital_CB}).
	\subsection{Interference Channel Matrix and Its Null Space}
	For user $j$, define the interference channel matrix $\widetilde{\bm{H}}_j$ as the the combination of effective channels of all users except user $j$, i.e., $\widetilde{\bm{H}}_j=\big[(\bm{H}_{1}\bm{A}\bm{Q})^T,\dots,(\bm{H}_{j-1}\bm{A}\bm{Q})^T,(\bm{H}_{j+1}\bm{A}\bm{Q})^T,\dots,(\bm{H}_{J}\bm{A}\bm{Q})^T\big]^T$.
	Consider the singular value decomposition (SVD) of $\widetilde{\bm{H}}_j$, i.e., $\widetilde{\bm{H}}_j=\widetilde{\bm{U}}_j\widetilde{\bm{\Sigma}}_j\widetilde{\bm{V}}_j$,
	and denote $\widetilde{L}_j=\text{rank}(\widetilde{\bm{H}}_j)$. Partition $\widetilde{\bm{V}}_j=[\widetilde{\bm{V}}_j^{(1)},\widetilde{\bm{V}}_j^{(0)}]$, where $\widetilde{\bm{V}}_j^{(0)}$ spans the null space of $\widetilde{\bm{H}}_j$. This null space basis can be used to suppress inter-user interference.
	\vspace{-.3cm}
	\subsection{Equivalent Channel and SVD-Based Decomposition}
	\label{app_sub_SVD}
	Consider the equivalent channel from the BS to user $j$ after projection onto the null space, i.e., $\bm{H}_j\bm{A}\bm{Q}\widetilde{\bm{V}}_j^{(0)}$.
	Let $\bar{L}_j=\text{rank}(\bm{H}_j\bm{A}\bm{Q}\widetilde{\bm{V}}_j^{(0)})$, and write its SVD as $\bm{H}_j\bm{A}\bm{Q}\widetilde{\bm{V}}_j^{(0)}=\bm{U}_j^{'}\bm{\Sigma}_j^{'}(\bm{V}_j^{'})^H$,
	where $\bm{U}_j^{'}$ and $\bm{V}_j^{'}$ are unitary matrices, and $\bm{\Sigma}_j^{'}$ is diagonal with singular values $\{\sigma_j^{(k)}\}$. Denote by $\bm{U}_j^{(1)}$ and $\bm{V}_j^{(1)}$ the first $\bar{L}_j$ columns of $\bm{U}_j^{'}$ and $\bm{V}_j^{'}$, respectively, which can be used to eliminate inter-stream interference and realize spatial multiplexing.
	\section{Proof of Theorem~\ref{the_optimal_channel_allocation}}
	\label{app_optimal_channel_allocation}
	We first show that under optimal channel and power allocations, subimages with higher semantic importance attain higher received SNR.  
	
	Consider user $j$ and two subimages $k_1$ and $k_2$, where $k_1$ is more important than $k_2$. Suppose under allocations $\{\bm{O}_j\}_j$ and $\{\bm{P}_j\}_j$, we have $\gamma_j^{(k_1)} < \gamma_j^{(k_2)}$.
	The reconstruction loss is
	\begin{align}
		\label{loss_original}
		\xi_j^{(k_1)}+\xi_j^{(k_2)}=f_j^{(k_1)}(\gamma_j^{(k_1)})+f_j^{(k_2)}(\gamma_j^{(k_2)}).
	\end{align}
	If we exchange the channel and power allocations between $k_1$ and $k_2$, the new loss becomes
	\begin{align}
		\label{loss_new}
		\hat{\xi}_j^{(k_1)}+\hat{\xi}_j^{(k_2)}=f_j^{(k_1)}(\gamma_j^{(k_2)})+f_j^{(k_2)}(\gamma_j^{(k_1)}).
	\end{align}
	Their difference is
	\begin{align}
		\label{loss_diff}
		&(\xi_j^{(k_1)}+\xi_j^{(k_2)})-(\hat{\xi}_j^{(k_1)}+\hat{\xi}_j^{(k_2)}) \notag\\
		&=(f_j^{(k_1)}(\gamma_j^{(k_1)})\!-\! f_j^{(k_1)}(\gamma_j^{(k_2)}))\!-\!(f_j^{(k_2)}(\gamma_j^{(k_1)})\!-\!f_j^{(k_2)}(\gamma_j^{(k_2)})).
	\end{align}
	Since when the received SNR changes, more important subimages induce larger changes in reconstruction loss (see Figs.~\ref{reconstruction_loss_vs_SNR}), we have
	\begin{align}
		\label{larger_change_loss}
		|f_j^{(k_1)}(\gamma_j^{(k_1)})\!-\!f_j^{(k_1)}(\gamma_j^{(k_2)})|\! >\! |f_j^{(k_2)}(\gamma_j^{(k_1)})\!-\! f_j^{(k_2)}(\gamma_j^{(k_2)})|.
	\end{align}
	Further, note that reconstruction loss decreases with SNR, i.e.,
	\begin{align}
		\label{decrease_with_SNR}
		f_j^{(k_1)}(\gamma_j^{(k_1)})\!>\! f_j^{(k_1)}(\gamma_j^{(k_2)}),\quad
		f_j^{(k_2)}(\gamma_j^{(k_1)})\!>\! f_j^{(k_2)}(\gamma_j^{(k_2)}).
	\end{align}
	By substituting (\ref{larger_change_loss}) and (\ref{decrease_with_SNR}) into (\ref{loss_diff}), we can find that the swapped allocation yields a smaller total loss. Hence, subimages with higher importance should receive higher SNR.

	Next, we show that higher-gain channels should be assigned to more important subimages. Suppose subimage $k_1$ is more important, but channel $\hat{k}_1$ assigned to $k_1$ has lower gain than $\hat{k}_2$ assigned to $k_2$, i.e.,
	\begin{align}
		\sigma_j^{(\hat{k}_1)} < \sigma_j^{(\hat{k}_2)}.
	\end{align}
	Let $p_j^{(k_1)}$ and $p_j^{(k_2)}$ be powers allocated to subimage $k_1$ and $k_2$, respectively. Then, we swap the channels allocated to subimages $k_1$ and $k_2$. To maintain the same SNR after swapping channels, the required powers are
	\begin{align}
		\widetilde{p}_j^{(k_1)}=\tfrac{\gamma_j^{(k_1)}\sigma^2}{(\sigma_j^{(\hat{k}_2)})^2},\quad
		\widetilde{p}_j^{(k_2)}=\tfrac{\gamma_j^{(k_2)}\sigma^2}{(\sigma_j^{(\hat{k}_1)})^2}.
	\end{align}
	The total power change is $(\widetilde{p}_j^{(k_1)}+\widetilde{p}_j^{(k_2)})-(p_j^{(k_1)}+p_j^{(k_2)})=\frac{\sigma^2(\gamma_j^{(k_1)}-\gamma_j^{(k_2)})((\sigma_j^{(\hat{k}_1)})^2-(\sigma_j^{(\hat{k}_2)})^2)}{(\sigma_j^{(\hat{k}_1)}\sigma_j^{(\hat{k}_2)})^2}$.
	Since $\gamma_j^{(k_1)}>\gamma_j^{(k_2)}$ and $\sigma_j^{(\hat{k}_1)}<\sigma_j^{(\hat{k}_2)}$, total power change is negative, i.e., less power is required after swapping. In addition, saved power can be reallocated to further improve performance, proving that better channels should be assigned to more important~subimages.

	\vspace{0.3cm}
{
	\section{Proof of Theorem~\ref{the_optimal_power_allocation}}
	\label{app_optimal_power_allocation}
Since the power allocation problem in (\ref{opt_problem_DBF_v3}) satisfies Slater’s condition, the optimal solution must satisfy the KKT conditions, which enable closed-form derivations. Specifically, the Lagrangian function is
\begin{align}
	\label{lag}
	&\mathcal{L}(\{p_j^{(k)}\},\{\lambda_j^{(k)}\},\mu)\notag\\
	=&\sum_j\frac{1}{K_j}\sum_{k=1}^{K_j}\!\left(\frac{b_j^{(k)}}{c_j^{(k)}+\big(\tfrac{(\sigma_j^{(o(k))})^2p_j^{(k)}}{\sigma^2}\big)^{e_j^{(k)}}}+d_j^{(k)}\right)\notag\\
	&-\sum_j\sum_{k=1}^{K_j}\lambda_j^{(k)}p_j^{(k)}+\mu\!\left(\sum_j\sum_{k=1}^{K_j}p_j^{(k)}-P\right).
\end{align}
From the KKT conditions, optimal powers $\{(p_j^{(k)})^*\}$ satisfy
\begin{align}
	\label{stationary}
	\frac{\partial\mathcal{L}}{\partial p_j^{(k)}}=0, \ \forall j,k, \qquad
	-\lambda_j^{(k)}p_j^{(k)}=0, \ \forall j,k,
\end{align}
which correspond to stationarity and complementary slackness.  

In the high SNR regime (i.e., the active operating regime where $(\gamma_j^{(k)})^{e_j^{(k)}} \gg c_j^{(k)}$), the constant $c_j^{(k)}$ in the denominator in (\ref{lag}) can be omitted, and objective term is approximated as:
\begin{align}
	&\frac{b_j^{(k)}}{c_j^{(k)}+\!\left(\tfrac{(\sigma_j^{(o(k))})^2p_j^{(k)}}{\sigma^2}\right)^{e_j^{(k)}}}+d_j^{(k)}\notag\\
	&\approx \frac{b_j^{(k)}}{\!\left(\tfrac{(\sigma_j^{(o(k))})^2p_j^{(k)}}{\sigma^2}\right)^{e_j^{(k)}}}+d_j^{(k)}.
\end{align}
Thus, the stationarity condition becomes
\begin{align}
	\label{stationary_v2}
	-\frac{b_j^{(k)}e_j^{(k)}}{\big(\tfrac{(\sigma_j^{(o(k))})^2}{\sigma^2}\big)^{e_j^{(k)}}}(p_j^{(k)})^{-e_j^{(k)}-1}+\mu-\lambda_j^{(k)}=0.
\end{align}

Then, we will show $\lambda_j^{(k)}=0$. To achieve this, we first prove that each subimage should be allocated transmit power under the optimal transmit power allocations. Suppose a subimage $k_1$ is not allocated power, i.e., $p_j^{(k_1)}=0$, while another subimage $k_2$ is allocated $p_j^{(k_2)}=p_0>0$. Then, let $\Delta$ be a small quantity. Consider the case where the transmit power for subimage $k_1$ increases from $p_j^{(k_1)}=0$ to $p_j^{(k_1)}=\Delta$, leading to an decrease in image reconstruction loss given by
\begin{align}
	\Delta \xi_j^{(k_1)}=-\frac{\big(\tfrac{(\sigma_j^{(o(k_1))})^2}{\sigma^2}\big)^{e_j^{(k_1)}} b_j^{(k_1)} \Delta^{e_j^{(k_1)}}}{\big(c_j^{(k_1)}+(\tfrac{(\sigma_j^{(o(k_1))})^2\Delta}{\sigma^2})^{e_j^{(k_1)}}\big)c_j^{(k_1)}} < 0.
\end{align}
Since $e_j^{(k_1)}<1$, we have
\begin{align}
	\label{lim_k_1}
	\lim_{\Delta\to 0^+}\Big|\frac{\Delta \xi_j^{(k_1)}}{\Delta}\Big| = +\infty,
\end{align}
showing that a tiny increment $\Delta$ yields a large accuracy gain.  

Correspondingly, when transmit power $p_j^{(k_2)}$ for the $k_2$-th subimage decreases from $p_0$ to $p_0-\Delta$, the image reconstruction loss becomes more severe, with changes given by
\begin{align}
	\Delta \xi_j^{(k_2)}=\frac{b_j^{(k_2)}e_j^{(k_2)}\big(\tfrac{(\sigma_j^{(o(k_2))})^2}{\sigma^2}\big)^{e_j^{(k_2)}}p_0^{\,e_j^{(k_2)}-1}}{\big(c_j^{(k_2)}+(\tfrac{(\sigma_j^{(o(k_2))})^2p_0}{\sigma^2})^{e_j^{(k_2)}}\big)^2}\Delta + o(\Delta) > 0,
\end{align}
based on which we can find that
\begin{align}
	\label{lim_k_2}
	\lim_{\Delta\to 0^+}\Big|\frac{\Delta \xi_j^{(k_2)}}{\Delta}\Big| < +\infty.
\end{align}
Based on (\ref{lim_k_1}) and (\ref{lim_k_2}), we can conclude that transferring $\Delta$ transmit power from the $k_2$-the subimage to the $k_1$-th subimage, i.e., $k_1$ is allocated transmit power, can improve the overall image reconstruction accuracy. Therefore, under the optimal transmit power allocations, each subimage should be allocated transmit power, i.e.,
\begin{align}
	\label{power_positive}
	p_j^{(k)} > 0,
\end{align}
Based on (\ref{power_positive}) and (\ref{stationary}), we can derive that $\lambda_j^{(k)}$ satisfy 
\begin{align} 
	\label{Lag_multiplier_zero} \lambda_j^{(k)}=0, 
\end{align} 
By substituting (\ref{Lag_multiplier_zero}) into (\ref{stationary_v2}), we have 
\begin{align} 
	-\frac{b_j^{(k)}e_j^{(k)}}{\big(\tfrac{(\sigma_j^{(o(k))})^2}{\sigma^2}\big)^{e_j^{(k)}}}(p_j^{(k)})^{-e_j^{(k)}-1}+\mu=0, 
\end{align} 
based on which we derive optimal power allocation in (\ref{power_allocation}).}

\end{appendices}

\end{document}